\newcommand{\defref}[1]{Definition~\ref{def:#1}}
\newcommand{\secref}[1]{Section~\ref{sec:#1}}
\newcommand{\exaref}[1]{Example~\ref{exa:#1}}
\newcommand{\figref}[1]{Figure~\ref{fig:#1}}
\newcommand{\tabref}[1]{Table~\ref{tab:#1}}
\newcommand{\lemref}[1]{Lemma~\ref{lem:#1}}
\newcommand{\thmref}[1]{Theorem~\ref{thm:#1}}
\newcommand{\corref}[1]{Corollary~\ref{cor:#1}}
\renewcommand{\geq}{\geqslant}
\renewcommand{\leq}{\leqslant}
\newcommand{\limp}{\Rightarrow}
\newcommand{\dpi}{\mathrm{\Pi}}
\newcommand{\tp}{\mathsf{tp}}
\newcommand{\per}{\mathsf{per}}
\newcommand{\thy}{\mathsf{Thy}}
\newcommand{\ctx}{\mathsf{Ctx}}
\newcommand{\ds}{\displaystyle}
\newcommand{\xT}{\mathcal{T}}
\newcommand{\xV}{\mathcal{V}}
\newcommand{\negr}{\xT_{\lnot}}
\newcommand{\neqr}{\xT_{\neq}}
\newcommand{\dnegr}{\xT_{\lnot\lnot}}
\newcommand{\impr}{\xT_{\limp}}
\newcommand{\nimpr}{\xT_{\lnot\limp}}
\newcommand{\symcastar}{\xT_{\mathrm{SYMCAST}_1}}
\newcommand{\symcastbr}{\xT_{\mathrm{SYMCAST}_2}}
\newcommand{\allr}{\xT_{\forall}}
\newcommand{\nallr}{\xT_{\lnot\forall}}
\newcommand{\ber}{\xT_{\mathrm{BE}}}
\newcommand{\bqr}{\xT_{\mathrm{BQ}}}
\newcommand{\fer}{\xT_{\mathrm{FE}}}
\newcommand{\fqr}{\xT_{\mathrm{FQ}}}
\newcommand{\matr}{\xT_{\mathrm{MAT}}}
\newcommand{\decr}{\xT_{\mathrm{DEC}}}
\newcommand{\conr}{\xT_{\mathrm{CON}}}
\newcommand{\erar}{\xT_{\mathrm{ER}_1}}
\newcommand{\erbr}{\xT_{\mathrm{ER}_2}}
\newcommand{\elem}{\mathsf{elem}}
\newcommand{\nat}{\mathsf{nat}}
\newcommand{\lst}{\mathsf{lst}}
\newcommand{\zro}{\mathsf{0}}
\newcommand{\suc}{\mathsf{s}}
\newcommand{\nil}{\mathsf{nil}}
\newcommand{\cons}{\mathsf{cons}}
\newcommand{\plus}{\mathsf{plus}}
\newcommand{\app}{\mathsf{app}}
\newcommand{\rev}{\mathsf{rev}}
\begin{document}

\title{Tableaux for Automated Reasoning in Dependently-Typed Higher-Order Logic}
\titlerunning{Tableaux for Automated Reasoning in DHOL}

\author{Johannes Niederhauser\inst{1}\textsuperscript{(\Letter)}
  \orcidID{0000-0002-8662-6834} \and
  Chad E.~Brown\inst{2} \and
  Cezary Kaliszyk\inst{1,3}\orcidID{0000-0002-8273-6059}}

\authorrunning{J. Niederhauser et al.}
\institute{
  Department of Computer Science, University of Innsbruck,
  Innsbruck, Austria
  \email{johannes.niederhauser@uibk.ac.at}
  \and
  Czech Institute of Informatics, Robotics and Cybernetics,
  Czech Technical University in Prague, Prague, Czech Republic
  \and
  School of Computing and Information Systems, University of Melbourne,
  Melbourne, Australia \\
  \email{cezarykaliszyk@gmail.com}
}

\maketitle

\begin{abstract}
  Dependent type theory gives an expressive type system
  facilitating succinct formalizations of mathematical concepts. In
  practice, it is mainly used for interactive theorem proving with
  intensional type theories, with PVS being a notable exception. In
  this paper, we present native rules for automated reasoning in a
  dependently-typed version (DHOL) of classical higher-order logic
  (HOL). DHOL has an extensional type theory with an undecidable type
  checking problem which contains theorem proving. We implemented the
  inference rules as well as an automatic type checking mode in Lash,
  a fork of Satallax, the leading tableaux-based prover for HOL. Our
  method is sound and complete with respect to provability in
  DHOL. Completeness is guaranteed by the incorporation of a sound and
  complete translation from DHOL to HOL recently proposed by Rothgang
  et al. While this translation can already be used as a preprocessing
  step to any HOL prover, to achieve better performance, our system
  directly works in DHOL. Moreover, experimental results show that the
  DHOL version of Lash can outperform all major HOL provers executed
  on the translation.
  \keywords{Tableaux \and Dependent Types \and Higher-Order Logic}
\end{abstract}

\section{Introduction}
\label{sec:intro}

Dependent types introduce the powerful concept of types depending on
terms. Lists of fixed length are an easy but interesting
example. Instead of having a simple type $\lst$ we may have a type
$\dpi n\colon\nat.\: \lst\: n$ which takes a natural number as
argument and returns the type of a list with length $n$. More
generally, lambda terms $\lambda x.s$ now have a dependent type
$\dpi x\colon A. B$ which makes the type of $(\lambda x.s)\: t$
dependent on $t$. With that, it is possible for example to specify an
unfailing version of the tail function by declaring its type to be
$\dpi n\colon\nat.\: \lst\: (\suc\: n) \to \lst\: n$.  Many
interactive theorem provers for dependent type theory are available
\cite{COQ,dMU21,BDN09,PS99}, most of them implement intensional type
theories, i.e., they distinguish between a decidable judgmental
equality (given by conversions) and provable equality (inhabiting an
identity type).  Notable exceptions are PVS\cite{ROS98} and
F$^{\star}$\cite{S+16} which implement an extensional type theory. In
the context of this paper, we say a type theory is \emph{extensional}
if judgmental equality and provable equality coincide, as
in~\cite{ML84}. The typing judgment in such type theories is usually
undecidable, as shown in~\cite{CCD17}.

The broader topic of this paper is automated reasoning support for
extensional type theories with dependent types. Not much has been done
to this end, but last year Rothgang et al.~\cite{RRB23} introduced an
extension of HOL to dependent types which they dub DHOL. In contrast
to dependent type theory, automated theorem proving in HOL has a long
history and led to the development of sophisticated provers
\cite{BBTV23,SB21,B12}. Rothgang et al.~defined a natural extension of
HOL and equipped it with automation support by providing a sound and
complete translation from DHOL into HOL. Their translation has been
implemented and can be used as a preprocessing step to any HOL prover
in order to obtain an automated theorem prover for DHOL.  Hence, by
committing to DHOL, automated reasoning support for extensional
dependent type theories does not have to be invented from scratch but
can benefit from the achievements of the automated theorem proving
community for HOL.

In this paper, we build on top of the translation from Rothgang et
al.~to develop a tableau calculus which is sound and complete for
DHOL. In addition, dedicated inference rules for DHOL are defined and
their soundness is proved. The tableau calculus is implemented as an
extension of Lash \cite{BK22}. The remainder of this paper is
structured as follows: \secref{preliminaries} sets the stage by
defining DHOL and the erasure from DHOL to HOL due to Rothgang et
al. before \secref{calculus} defines the tableau calculus and provides
soundness and completeness proofs. The implementation is described in
\secref{implementation}. Finally, we report on experimental results in
\secref{experiments}.

\section{Preliminaries}
\label{sec:preliminaries}

\subsection{HOL}

We start by giving the syntax of higher-order logic (HOL) which goes
back to Church \cite{C40}. In order to allow for a graceful extension
to DHOL, we define it with a grammar based on \cite{RRB23}.
\begin{align*}
  T &::= \circ \:\mid\: T,a\colon\tp \:\mid\: T,x\colon A \:\mid\: T,s \label{eq:holtheories} \tag{theories} \\
  \Gamma &::= \cdot \:\mid\: \Gamma,x\colon A \:\mid\: \Gamma,s \label{eq:holcontexts} \tag{contexts} \\
  A,B &::= a \:\mid\: A \to B \:\mid\: o \label{eq:holtypes} \tag{types} \\
  s,t,u,v &::= x \:\mid\: \lambda x\colon A.s \:\mid\: s\: t \:\mid\: \bot \:\mid\: \lnot s \:\mid\: s \limp t \:\mid\: s =_{A} t \:\mid\: \forall x\colon A.s \label{eq:holterms} \tag{terms}
\end{align*}

A theory consists of base type declarations $a\colon\tp$, typed
variable or constant declarations $x\colon A$ and axioms. Contexts are
like theories but without base type declarations. In the following, we
will often write $s \in T,\Gamma$ to denote that $s$ occurs in the
combination of $T$ and $\Gamma$. Furthermore, note that $\circ$ and
$\cdot$ denote the empty theory and context, respectively. Types are
declared base types $a$, the base type of booleans $o$ or function
types $A \to B$. As usual, the binary type constructor $\to$ is
right-associative. Terms are simply-typed lambda-terms (modulo
$\alpha$-conversion) enriched by the connectives $\bot$, $\lnot$,
$\limp$, $=_{A}$ as well as the typed binding operator for
$\forall$. All connectives yield terms of type $o$ (formulas). By
convention, application associates to the left, so $s\: t\: u$ means
$(s\: t)\: u$ with the exception that $\lnot\: s\: t$ always means
$\lnot(s\: t)$. Moreover, we abbreviate $\lnot(s =_{A} t)$ by
$s \neq_{A} t$ and sometimes omit the type subscript of $=_{A}$ when
it is either clear from the context or irrelevant. We write
$s[x_1/t_1,\dots,x_n/t_n]$ to denote the simultaneous capture-avoiding
substitution of the $x_i$'s by the $t_i$'s. The set of free variables
of a term $s$ is denoted by $\xV s$.

A theory $T$ is well-formed if all types are well-formed and axioms
have type $o$ with respect to its base type declarations. In that
case, we write $\vdash^{\mathsf{s}} T\:\thy$ where the superscript
$\mathsf{s}$ indicates that we are in the realm of simple types. Given
a well-formed theory $T$, the well-formedness of a context $\Gamma$ is
defined in the same way and denoted by
$\vdash^{\mathsf{s}}_T \Gamma\:\ctx$.  Given a theory $T$ and a
context $\Gamma$, we write $\Gamma \vdash^{\mathsf{s}}_T A\:\tp$ to
state that $A$ is a well-formed type and
$\Gamma \vdash^{\mathsf{s}}_T s\colon A$ to say that $s$ has type $A$.
Furthermore, $\Gamma \vdash^{\mathsf{s}}_{T} s$ denotes that $s$ has
type $o$ and is provable from $\Gamma$ and $T$ in HOL. Finally, we use
$\Gamma \vdash^{\mathsf{s}}_T A \equiv B$ to state that $A$ and $B$
are equivalent well-formed types. For HOL this is trivial as it
corresponds to syntactic equivalence, but this will change drastically
in DHOL.

\subsection{DHOL}

The extension from HOL to DHOL consists of two crucial ingredients:
\begin{itemize}
\item the type constructor $A \to B$ is replaced by the constructor
  $\dpi x\colon A. B$ which potentially makes the return type $B$
  dependent on the actual argument $x$; we stick to the usual arrow
  notation if $B$ does not contain $x$
\item base types $a$ can now take term arguments; for an n-ary base
  type we write
  $a\colon\dpi x_1\colon A_1.\:\cdots\: \dpi x_n\colon A_n.\:\tp$
\end{itemize}
Thus, the grammar defining the syntax of DHOL is given as follows:
\begin{align*}
  T &::= \circ \:\mid\: T,a\colon(\dpi x\colon A.)^*\:\tp \:\mid\: T,x\colon A \:\mid\: T,s \label{eq:dholtheories} \tag{theories} \\
  \Gamma &::= {\cdot} \:\mid\: \Gamma,x\colon A \:\mid\: \Gamma,s \label{eq:dholcontexts} \tag{contexts} \\
  A,B &::= a\:t_1\: \dots\: t_n \:\mid\: \dpi x\colon A.B \:\mid\: o \label{eq:dholtypes} \tag{types} \\
  s,t,u,v &::= x \:\mid\: \lambda x\colon A.s \:\mid\: s\: t \:\mid\: \bot \:\mid\: \lnot s \:\mid\: s \limp t \:\mid\: s =_{A} t \:\mid\: \forall x\colon A.s \label{eq:dholterms} \tag{terms}
\end{align*}
If a base type $a$ has arity 0, it is called a \emph{simple base
  type}. Note that HOL is the fragment of DHOL where all base types
have arity 0. Allowing base types to have term arguments makes type
equality a highly non-trivial problem in DHOL. For example, if
$\Gamma \vdash^{\mathsf{d}}_T s\colon \dpi x\colon A.B$ (the
$\mathsf{d}$ in $\vdash^{\mathsf{d}}$ indicates that we are speaking
about DHOL) and $\Gamma \vdash^{\mathsf{d}}_T t \colon A'$ we still
want $\Gamma \vdash^{\mathsf{d}}_T (s\: t)\colon B[x/t]$ to hold if
$\Gamma \vdash^{\mathsf{d}}_T A \equiv A'$, so checking whether two
types are equal is a problem which occurs frequently in
DHOL. Intuitively, we have $\Gamma \vdash^{\mathsf{d}}_T A \equiv A'$
if and only if their simply-typed skeleton consisting of arrows and
base types without their arguments is equal and given a base type
$a\colon\dpi x_1\colon A_1.\:\cdots\: \dpi x_n\colon A_n.\:\tp$, an
occurrence $a\: t_1\: \dots\: t_n$ in $A$ and its corresponding
occurrence $a\: t_1'\: \dots\: t_n'$ in $A'$, we have
$\Gamma \vdash^{\mathsf{d}}_T t_i
=_{A_i[x_1/t_1,\dots,x_{i-1}/t_{i-1}]} t_i'$ for all
$1 \leq i \leq n$. This makes DHOL an extensional type theory where
already type checking is undecidable as it requires theorem
proving. Another difference from HOL is the importance of the chosen
representation of contexts and theories: Since the well-typedness of a
term may depend on other assumptions, the order of the type
declarations and formulas in a context $\Gamma$ or theory $T$ is
relevant. A formal definition of the judgments
$\Gamma \vdash^{\mathsf{d}}_T A\:\tp$,
$\Gamma \vdash^{\mathsf{d}}_T s\colon A$,
$\Gamma \vdash^{\mathsf{d}}_T s$ and
$\Gamma \vdash^{\mathsf{d}}_T A \equiv B$ via an inference system is
given in \cite{RRB23}. Since we use more primitive connectives, a
minor variant is presented in \figref{dholndcalculus}.

\begin{figure}[t]
  \centering
  \begin{gather*}
    \frac{}{\vdash^{\mathsf{d}} {\circ}\:\thy} \text{thyEmpty}
    \qquad
    \frac{\vdash^{\mathsf{d}}_T x_1\colon A_1,\dots,x_n\colon A_n\:\ctx}{\vdash^{\mathsf{d}} T, a\colon \dpi x_1\colon A_1.\: \dots\: \dpi x_n\colon A_n.\tp\:\thy} \text{thyType}
    \\
    \frac{\vdash^{\mathsf{d}}_T A\:\tp}{\vdash^{\mathsf{d}} T, x\colon A\:\thy} \text{thyConst}
    \qquad
    \frac{\vdash^{\mathsf{d}}_T s\colon o}{\vdash^{\mathsf{d}} T, s\:\thy} \text{thyAxiom}
    \qquad
    \frac{\vdash^{\mathsf{d}} T\:\thy}{\vdash^{\mathsf{d}}_T {\cdot}\:\ctx} \text{ctxEmpty}
    \\
    \frac{\Gamma \vdash^{\mathsf{d}}_T A\:\tp}{\vdash^{\mathsf{d}}_T \Gamma, x\colon A\:\ctx} \text{ctxVar}
    \qquad
    \frac{\Gamma \vdash^{\mathsf{d}}_T s\colon o}{\vdash^{\mathsf{d}}_T \Gamma, s\:\ctx} \text{ctxAssume}
    \\
    \frac{a\colon (\cdots\dpi x_i\colon A_i.\cdots.\tp) \in T \:\: \vdash^{\mathsf{d}}_T \Gamma\:\ctx \:\: \cdots \Gamma \vdash^{\mathsf{d}}_T t_i\colon A_i[x_1/s_1\dots x_{i-1}/s_{i-1}] \cdots}{\Gamma \vdash^{\mathsf{d}}_T a\: t_1 \:\dots\: t_n \:\tp} \text{type}
    \\
    \frac{x\colon A' \in T \quad \Gamma \vdash^{\mathsf{d}}_T A' \equiv A}{\Gamma \vdash^{\mathsf{d}}_T x\colon A} \text{const}
    \qquad
    \frac{s \in T \quad \vdash^{\mathsf{d}}_T \Gamma\:\ctx}{\Gamma \vdash^{\mathsf{d}}_T s} \text{axiom}
    \\
    \frac{x\colon A' \in \Gamma \quad \Gamma \vdash^{\mathsf{d}}_T A' \equiv A}{\Gamma \vdash^{\mathsf{d}}_T x\colon A} \text{var}
    \qquad
    \frac{s \in \Gamma \quad \vdash^{\mathsf{d}}_T \Gamma\:\ctx}{\Gamma \vdash^{\mathsf{d}}_T s} \text{assume}
    \qquad
    \frac{\vdash^{\mathsf{d}}_T \Gamma\:\ctx}{\Gamma \vdash^{\mathsf{d}}_T o\:\tp} \text{bool}
    \\
    \frac{\Gamma \vdash^{\mathsf{d}}_T A\:\tp \quad \Gamma,x\colon A \vdash^{\mathsf{d}}_T B\:\tp}
         {\Gamma \vdash^{\mathsf{d}}_T \dpi x\colon A.B\:\tp} \text{pi}
    \qquad
    \frac{\Gamma \vdash^{\mathsf{d}}_T A \equiv A' \quad \Gamma, x\colon A \vdash^{\mathsf{d}}_T B \equiv B'}
         {\Gamma \vdash^{\mathsf{d}}_T \dpi x\colon A.B \equiv \dpi x\colon A'.B'} \text{cong}\dpi
    \\
    \frac{a\colon (\cdots\dpi x_i\colon A_i.\cdots.\tp) \in T \:\: \vdash^{\mathsf{d}}_T \Gamma\:\ctx \:\: \cdots \Gamma \vdash^{\mathsf{d}}_T s_i =_{A_i[x_1/s_1\dots x_{i-1}/s_{i-1}]} t_i \cdots}
    {\Gamma \vdash^{\mathsf{d}}_T a\: s_1 \:\dots\: s_n \equiv a\: t_1 \:\dots\: t_n} \text{congBT}
    \\
    \frac{\Gamma, x\colon A \vdash^{\mathsf{d}}_T t\colon B}{\Gamma \vdash^{\mathsf{d}}_T (\lambda x\colon A.t)\colon \dpi x\colon A.B} \text{lambda}
    \qquad
    \frac{\Gamma \vdash^{\mathsf{d}}_T s\colon \dpi x\colon A.B \quad \Gamma \vdash^{\mathsf{d}}_T t\colon A}
         {\Gamma \vdash^{\mathsf{d}}_T (s\: t)\colon B[x/t]} \text{appl}
    \\
    \frac{\Gamma \vdash^{\mathsf{d}}_T s\colon A \quad \Gamma \vdash^{\mathsf{d}}_T t\colon A}{\Gamma \vdash^{\mathsf{d}}_T (s =_A t)\colon o} \text{=type}
    \qquad
    \frac{\Gamma \vdash^{\mathsf{d}}_T A \equiv A' \quad \Gamma, x\colon A \vdash^{\mathsf{d}}_T t =_B t'}
         {\Gamma \vdash^{\mathsf{d}}_T \lambda x\colon A. t =_{\dpi x\colon A.B} \lambda x\colon A'.t'} \text{cong}\lambda
    \\
    \frac{\Gamma \vdash^{\mathsf{d}}_T s =_{\dpi x\colon A.B} s' \quad \Gamma \vdash^{\mathsf{d}}_T t =_A t'}
         {\Gamma \vdash^{\mathsf{d}}_T s\: t =_{B[x/t]} s'\: t'} \text{congAppl}
    \qquad
    \frac{\Gamma \vdash^{\mathsf{d}}_T s\colon A}{\Gamma \vdash^{\mathsf{d}}_T s =_A s} \text{refl}
    \\
    \frac{\Gamma \vdash^{\mathsf{d}}_T s =_A t}{\Gamma \vdash^{\mathsf{d}}_T t =_A s} \text{sym}
    \qquad
    \frac{\Gamma \vdash^{\mathsf{d}}_T s\colon (\dpi x\colon A.B) \quad x \not\in \xV s}
         {\Gamma \vdash^{\mathsf{d}}_T s =_{\dpi x\colon A.B} \lambda x\colon A. s\: x} \text{eta}
    \\
    \frac{\Gamma \vdash^{\mathsf{d}}_T (\lambda x\colon A.s)\: t\colon B}
         {\Gamma \vdash^{\mathsf{d}}_T (\lambda x\colon A.s)\: t =_B s[x/t]} \text{beta}
    \qquad
    \frac{\vdash^{\mathsf{d}}_T \Gamma\:\ctx}
         {\Gamma \vdash^{\mathsf{d}}_T \bot\colon o} \bot\text{type}
    \qquad
    \frac{\Gamma \vdash^{\mathsf{d}}_T s\colon o \quad \Gamma \vdash^{\mathsf{d}}_T \bot}
         {\Gamma \vdash^{\mathsf{d}}_T s} \bot\text{e}
    \\
    \frac{\Gamma \vdash^{\mathsf{d}}_T s\colon o}
         {\Gamma \vdash^{\mathsf{d}}_T (\lnot s)\colon o} \lnot\text{type}
    \qquad
    \frac{\Gamma \vdash^{\mathsf{d}}_T s\colon o \quad \Gamma, s \vdash^{\mathsf{d}}_T \bot}
         {\Gamma \vdash^{\mathsf{d}}_T \lnot s} \lnot\text{i}
    \qquad
    \frac{\Gamma \vdash^{\mathsf{d}}_T s \quad \Gamma \vdash^{\mathsf{d}}_T \lnot s}
         {\Gamma \vdash^{\mathsf{d}}_T \bot} \lnot\text{e}
    \\
    \frac{\Gamma \vdash^{\mathsf{d}}_T \lnot \lnot s}
         {\Gamma \vdash^{\mathsf{d}}_T s} \lnot\lnot\text{e}
    \qquad     
    \frac{\Gamma \vdash^{\mathsf{d}}_T s\colon o \quad \Gamma,s \vdash^{\mathsf{d}}_T t\colon o}
         {\Gamma \vdash^{\mathsf{d}}_T (s \limp t)\colon o} {\limp}\text{type}
    \qquad
    \frac{\Gamma \vdash^{\mathsf{d}}_T s\colon o \quad \Gamma,s \vdash^{\mathsf{d}}_T t}
         {\Gamma \vdash^{\mathsf{d}}_T s \limp t} {\limp}\text{i}
    \\
    \frac{\Gamma \vdash^{\mathsf{d}}_T s \limp t \quad \Gamma \vdash^{\mathsf{d}}_T s}
         {\Gamma \vdash^{\mathsf{d}}_T t} {\limp}\text{e}
    \qquad
    \frac{\Gamma, x\colon A \vdash^{\mathsf{d}}_T s\colon o}
         {\Gamma \vdash^{\mathsf{d}}_T \forall x\colon A.s\colon o} \forall\text{type}
    \qquad
    \frac{\Gamma, x \colon A \vdash^{\mathsf{d}}_T s}
         {\Gamma \vdash^{\mathsf{d}}_T \forall x\colon A.s} \forall\text{i}
    \\
    \frac{\Gamma \vdash^{\mathsf{d}}_T \forall x\colon A.s \quad \Gamma \vdash^{\mathsf{d}}_T t\colon A}
         {\Gamma \vdash^{\mathsf{d}}_T s[x/t]} \forall\text{e}
    \qquad
    \frac{\Gamma \vdash^{\mathsf{d}}_T A \equiv A' \quad \Gamma, x\colon A \vdash^{\mathsf{d}}_T s =_o s'}
         {\Gamma \vdash^{\mathsf{d}}_T \forall x\colon A.s =_o \forall x\colon A'.s'} \text{cong}\forall
    \\
    \frac{\Gamma \vdash^{\mathsf{d}}_T s =_o s' \quad \Gamma \vdash^{\mathsf{d}}_T s'}{\Gamma \vdash^{\mathsf{d}}_T s} \text{cong}{\vdash}
    \qquad
    \frac{\Gamma \vdash^{\mathsf{d}}_T s \bot \quad \Gamma \vdash^{\mathsf{d}}_T s (\lnot \bot)}{\Gamma \vdash^{\mathsf{d}}_T \forall x\colon o. sx} \text{boolExt}
    \\
    \frac{\Gamma \vdash^{\mathsf{d}}_T s\colon o \quad \Gamma,x\colon A \vdash^{\mathsf{d}}_T s \quad A \text{ simple type}}
         {\Gamma \vdash^{\mathsf{d}}_T s} \text{nonempty}
  \end{gather*}
  \caption{Natural Deduction Calculus for DHOL}
  \label{fig:dholndcalculus}
\end{figure}

\afterpage{\clearpage}

\begin{example}
  \label{exa:plus_app}
  Consider the simple base types $\nat\colon \tp$ and
  $\elem\colon \tp$ as well as the dependent base type
  $\lst\colon \dpi x\colon \nat.\:\tp$. The constants and functions
  \begin{align*}
    \zro\colon& \nat &  \suc\colon& \nat \to \nat \\
    \nil\colon& \lst\:\zro & \cons\colon & \dpi n\colon \nat.\: \elem \to \lst\: n \to \lst\: (\suc\: n)
  \end{align*}
  provide means to represent their inhabitants. Additionally, we
  define functions $\plus\colon \nat \to \nat \to \nat$
  \begin{align*}
    \forall n\colon\nat.\: \plus\:\zro\: n &=_{\nat} n &
    \forall n,m\colon\nat.\: \plus\: (\suc\: n)\: m &=_{\nat} \suc\:(\plus\: n\: m)
  \end{align*}
  and $\app\colon \dpi n\colon\nat.\:\dpi m\colon\nat.\:\lst\:n \to \lst\:m \to \lst\: (\plus\:n\:m)$:
  \begin{align*}
    & \forall n\colon\nat,x\colon\lst\: n.\: \app\:\zro\: n\:\nil\: x =_{\lst\: n} x \\
    & \forall n,m\colon\nat,z\colon\elem,x\colon\lst\: n,y\colon\lst\: m. \\
    & \hspace{.6cm}\app\: (\suc\: n)\: m\: (\cons\: n\: z\: x)\: y =_{\lst\: (\suc\: (\plus\: n\: m))} \cons\: (\plus\: n\: m)\: z\: (\app\: n\: m\: x\: y)
  \end{align*}
  In the defining equations of $\app$, we annotated the equality sign
  with the dependent type of the term on the right-hand side. In all
  cases, the simply-typed skeleton is just $\lst$ but for a type check
  we need to prove the two equalities
  \begin{align*}
    \forall n\colon\nat.\: \plus\:\zro\: n &=_{\nat} n &
    \forall n,m\colon\nat.\: \plus\: (\suc\: n)\: m &=_{\nat} \suc\: (\plus\: n\: m)
  \end{align*}
  which are exactly the corresponding axioms for $\plus$. Type
  checking the conjecture
  \[\forall n\colon\nat, x\colon\lst\: n.\: \app\: n\: \zro\: x\: \nil =_{\lst\: n} x \]
  would require proving
  $\forall n\colon\nat.\:\plus\,\, n\,\, \zro =_{\nat} n$ which can be
  achieved by induction on natural numbers if we include the Peano
  axioms.
\end{example}

\subsection{Erasure}

The following definition presents the translation from DHOL to HOL due
to Rothgang et al.~\cite{RRB23}. Intuitively, the translation erases
dependent types to their simply typed skeletons by ignoring arguments
of base types. The thereby lost information on concrete base type
arguments is restored with the help of a partial equivalence relation
(PER) $A^*$ for each type $A$. A PER is a symmetric, transitive
relation. The elements on which it is also reflexive are intended to
be the members of the original dependent type, i.e.,
$\Gamma \vdash^{\mathsf{d}}_T s\colon A$ if and only if
$\overline{\Gamma} \vdash^{\mathsf{s}}_{\overline{T}} A^*\:
\overline{s}\: \overline{s}$.

\begin{definition}
  The translation from DHOL to HOL is given by the erasure function
  $\overline{s}$ as well as $A^*$ which computes the formula
  representing the corresponding PER of a type $A$. The functions are
  mutually defined by recursion on the grammar of DHOL.  The erasure
  of a theory (context) is defined as the theory (context) which
  consists of its erased components.
  \begin{align*}
    \overline{o} \:&=\: o &
    \overline{a\: t_1 \:\dots\: t_n} \:&=\: a \\
    \overline{\dpi x\colon A.B} \:&=\: \overline{A} \to \overline{B} &
    \overline{x} \:&=\: x \\
    \overline{\lambda x \colon A.s} \:&=\: \lambda x\colon\overline{A}.\:\overline{s} &
    \overline{s\: t} \:&=\: \overline{s}\: \overline{t} \\
    \overline{\bot} \:&=\: \bot &
    \overline{\lnot s} \:&=\: \lnot \overline{s} \\
    \overline{s \limp t} \:&=\: \overline{s} \limp \overline{t} &
    \overline{s =_{A} t} \:&=\: A^*\: \overline{s}\: \overline{t} \\
    \overline{\forall x\colon A.s} \:&=\: \forall x\colon\overline{A}.\: A^*\: x\: x \limp \overline{s} &
    \overline{x\colon A} \:&=\: x\colon \overline{A}, A^*\: x\: x
  \end{align*}
  \begin{align*}
    \overline{a\colon \dpi x_1\colon A_1.\:\cdots\: \dpi x_n\colon A_n.\:\tp} \:&=\: a\colon\tp, a^*\colon \overline{A_1} \to \cdots \to \overline{A_n} \to a \to a \to o, a_{\per} \\
    o^*\: s\: t \:&=\: s =_o t \\
    (a\: t_1 \:\dots\: t_n)^*\: s\: t \:&=\: a^*\: \overline{t_1}\: \dots \: \overline{t_n}\: s\: t \\
    (\dpi x\colon A.B)^*\: s\: t \:&=\: \forall x,y\colon\overline{A}.\: A^*\: x\: y \limp B^*\: (s\: x)\: (t\: y)
  \end{align*}
  Here, $a_{\per}$ is defined as follows:
  \[a_{\per} = \forall  x_1\colon\overline{A_1}.\: \dots \forall x_n\colon\overline{A_n}.\: \forall u,v\colon a.\: a^*\:x_1\:\dots\:x_n\:u\:v \limp u =_a v\]
\end{definition}

\begin{theorem}[Completeness \cite{RRB23}]
  \label{thm:erasure_completeness}
  \begin{itemize}
  \item if $\Gamma \vdash^{\mathsf{d}}_T A\colon\tp$ then
    $\overline{\Gamma} \vdash^{\mathsf{s}}_{\overline{T}}
    \overline{A}\colon\tp$ and $A^*$ is a PER over $\overline{A}$
  \item if $\Gamma \vdash^{\mathsf{d}}_T A \equiv B$ then
    $\overline{\Gamma} \vdash^{\mathsf{s}}_{\overline{T}} \forall
    x,y\colon\overline{A}.\: A^*\: x\: y =_o B^*\: x\: y$
  \item if $\Gamma \vdash^{\mathsf{d}}_T s\colon A$ then
    $\overline{\Gamma} \vdash^{\mathsf{s}}_{\overline{T}}
    \overline{s}\colon\overline{A}$ and
    $\overline{\Gamma} \vdash^{\mathsf{s}}_{\overline{T}} A^*\:
    \overline{s}\: \overline{s}$
  \item if $\Gamma \vdash^{\mathsf{d}}_T s$ then
    $\overline{\Gamma} \vdash^{\mathsf{s}}_{\overline{T}}
    \overline{s}$
  \end{itemize}
\end{theorem}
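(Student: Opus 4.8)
The plan is to prove all four claims simultaneously by induction on the derivation of the corresponding DHOL judgment, with one case per inference rule of \figref{dholndcalculus}. Before starting the induction I would isolate two auxiliary lemmas. The first is a substitution lemma asserting that erasure commutes with substitution, namely $\overline{s[x/t]} = \overline{s}[x/\overline{t}]$ and $(A[x/t])^* = A^*[x/\overline{t}]$, proved by a routine structural induction; it is needed in every rule that substitutes a term for a variable, such as appl, $\forall$e, beta, congAppl and congBT. The second is a skeleton-invariance lemma: whenever $\Gamma \vdash^{\mathsf{d}}_T A \equiv B$ holds, the simple types $\overline{A}$ and $\overline{B}$ coincide syntactically, which follows by induction on the type-equality rules since erasure discards the arguments of base types. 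Together these lemmas let me dispatch the simply-typed components of all four claims uniformly.

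For the induction itself, the formation, propositional and typing rules ($\bot$, $\lnot$, $\limp$, type, pi, bool, lambda, appl) go through almost verbatim, because erasure is a homomorphism for these constructors; the only additional care is to feed the membership witness $A^*\,\overline{s}\,\overline{s}$ coming from the third bullet wherever a rule needs an inhabitant of the PER. The quantifier rules are the first substantial cases. Since $\overline{\forall x\colon A.s} = \forall x\colon\overline{A}.\,A^*\,x\,x \limp \overline{s}$ and $\overline{x\colon A} = x\colon\overline{A},\,A^*\,x\,x$, the rule $\forall$i translates to $\forall$-introduction followed by $\limp$-introduction, whereas $\forall$e instantiates the guarded quantifier and then discharges the guard $A^*\,\overline{t}\,\overline{t}$ using precisely the reflexivity witness that the third bullet produces from the premise $t\colon A$, after rewriting the conclusion with the substitution lemma.

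The equality and congruence rules carry the real weight, because DHOL equality $s =_A t$ is translated to the relational atom $A^*\,\overline{s}\,\overline{t}$ rather than to native HOL equality. I would first settle the PER assertion of the first bullet by induction on the structure of $A$: $o^*$ is $=_o$ and hence a PER, base-type relations are symmetric and transitive because the erasure of a base-type declaration supplies the corresponding PER axioms in $\overline{T}$, and $(\dpi x\colon A.B)^*$ inherits symmetry and transitivity from $A^*$ and $B^*$ by a direct first-order argument. With PER-ness available, refl and sym reduce to reflexivity (again from the third bullet) and symmetry of $A^*$; congAppl unfolds $(\dpi x\colon A.B)^*$ and instantiates it at $\overline{t},\overline{t'}$ using the induction hypothesis $A^*\,\overline{t}\,\overline{t'}$, closing with the substitution lemma applied to $(B[x/t])^*$; and beta and eta are discharged by native HOL $\beta\eta$-conversion together with the appropriate reflexivity witness. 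The second bullet is obtained from the cong rules and congBT by unfolding the two PERs and comparing them argument-wise.

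I expect the main obstacle to be the tight coupling of the four claims within a single induction rather than any individual case. The provability claim (fourth bullet) repeatedly consumes the membership facts $A^*\,\overline{t}\,\overline{t}$ delivered by the typing claim (third bullet); the equality-typing rules cong$\lambda$ and cong$\forall$ need the coextensionality of PERs from the second bullet; and the PER property of the first bullet is used pervasively to manipulate the relational encoding of equality. The delicate part is therefore to phrase a joint induction invariant strong enough that all four hypotheses are simultaneously available, and in particular to handle the rules that silently appeal to type equality — const, var and appl — by combining skeleton-invariance with the second bullet. The remaining genuinely logical content is concentrated in boolExt and cong${\vdash}$, where the identity $o^*\,s\,t = (s =_o t)$ must be used to pass between provable boolean equality and provability.
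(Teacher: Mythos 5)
The paper itself contains no proof of this theorem: it is imported verbatim, with citation, from Rothgang et al.~\cite{RRB23}, so the only comparison available is with the proof in that reference. Your plan --- a simultaneous induction over the derivations of \figref{dholndcalculus}, supported by a substitution lemma ($\overline{s[x/t]} = \overline{s}[x/\overline{t}]$ and $(A[x/t])^* = A^*[x/\overline{t}]$), the observation that $\Gamma \vdash^{\mathsf{d}}_T A \equiv B$ forces $\overline{A}$ and $\overline{B}$ to coincide syntactically, PER-ness of $A^*$ by an auxiliary induction on the structure of $A$ using the $a_{\per}$ axioms, and the guard-discharge mechanism feeding the third bullet's witnesses $A^*\,\overline{t}\,\overline{t}$ into the quantifier cases --- is essentially the argument given in that reference, matching it in both overall structure and key lemmas (your congBT case, where PER-relatedness of the type arguments must be converted into interchangeability under $a^*$ via the $a_{\per}$ axioms, is also precisely where the original proof concentrates its care).
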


\begin{theorem}[Soundness \cite{RRB23}]
  \label{thm:erasure_soundness}
  \begin{itemize}
  \item if $\Gamma \vdash^{\mathsf{d}}_T s\colon o$ and
    $\overline{\Gamma} \vdash^{\mathsf{s}}_{\overline{T}}
    \overline{s}$ then $\Gamma \vdash^{\mathsf{d}}_T s$
  \item if $\Gamma \vdash^{\mathsf{d}}_T s\colon A$ and
    $\Gamma \vdash^{\mathsf{d}}_T t\colon A$ and
    $\overline{\Gamma} \vdash^{\mathsf{s}}_{\overline{T}} A^*\:
    \overline{s}\: \overline{t}$ then
    $\Gamma \vdash^{\mathsf{d}}_T s =_{A} t$
  \end{itemize}
\end{theorem}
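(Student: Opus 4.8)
The plan is to argue semantically, transporting a DHOL counter-model along the erasure into a HOL counter-model; I treat each of the two claims by contraposition. The two ingredients I would need are a Henkin-style model theory for DHOL together with its completeness, and a transfer lemma matching the PER interpretation $A^*$ against genuine equality in the DHOL model.

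First I would fix a notion of DHOL model $\mathcal M$ of $T,\Gamma$: a frame assigning a set to every closed base type $a\,t_1\dots t_n$, the two truth values to $o$, and a (Henkin) function space to every $\dpi x\colon A.B$, together with interpretations of the declared constants that validate the axioms of $T$ and $\Gamma$. Equality $=_A$ is interpreted as actual equality in the relevant domain, which is faithful precisely because DHOL equality is extensional. Soundness of the calculus in \figref{dholndcalculus} with respect to these models is a routine rule-by-rule check, and a term-model (Henkin) construction yields the completeness I need: if $\Gamma \vdash^{\mathsf{d}}_T s\colon o$ but $\Gamma \not\vdash^{\mathsf{d}}_T s$, then some $\mathcal M \models T,\Gamma$ has $\mathcal M \not\models s$.

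Next I would erase the dependency at the level of domains to obtain a HOL model $\mathcal N$ of $\overline T,\overline\Gamma$: the simple type $\overline a$ is interpreted as a set $D_a$ into which every slice $(a\,\vec d)^{\mathcal M}$ of $a$ embeds, every $\dpi$-type is sent to the corresponding full function space, and the fresh relation symbol $a^*$ is interpreted so that $a^*\,\vec d\,u\,v$ holds exactly when $u,v$ arise from the slice $(a\,\vec d)^{\mathcal M}$ and are equal there (which is what makes the axioms $a_{\per}$ hold in $\mathcal N$). The core is then a \emph{Correspondence Lemma}, proved by simultaneous induction on the grammar of DHOL types and terms, stating that (i) $(\overline s)^{\mathcal N} = s^{\mathcal M}$ for every well-typed $s$, identifying each slice with its image in $D_a$; (ii) $\mathcal N \models A^*\,u\,v$ iff $u$ and $v$ denote the same element of $A^{\mathcal M}$; and hence (iii) $\mathcal N \models \overline s$ iff $\mathcal M \models s$ for formulas $s$, the key point being that the relativised quantifier $\forall x\colon\overline A.\:A^*\,x\,x \limp \dots$ ranges exactly over the $A^*$-reflexive points, i.e.\ over $A^{\mathcal M}$. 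Granting the lemma the theorem is immediate: for the first claim $\mathcal N \not\models \overline s$, so by soundness of HOL $\overline s$ is not provable in $\overline T,\overline\Gamma$; for the second claim, a model with $\mathcal M \models s \neq_A t$ gives $\mathcal N \not\models A^*\,\overline s\,\overline t$ by clause (ii), so $A^*\,\overline s\,\overline t$ is not HOL-provable.

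I expect the main obstacle to be the $\dpi$-case of the Correspondence Lemma, where $(\dpi x\colon A.B)^*\,s\,t = \forall x,y\colon\overline A.\:A^*\,x\,y \limp B^*\,(s\,x)\,(t\,y)$ must be shown to express exactly equality of two functions over the DHOL domain. This rests on the extensionality rules (\textsf{eta}, \textsf{cong}$\lambda$, \textsf{congAppl}) and on the inductive hypotheses for $A$ and $B$, and it needs genuine care when $A^*$ is a non-trivial PER: the HOL function space $D_a \to \dots$ is then strictly larger than the DHOL one, and only the $A^*$-respecting functions correspond to inhabitants of $\dpi x\colon A.B$. The second substantial step is DHOL completeness itself; although it is a standard Henkin argument, the term model must interpret type equality $\Gamma \vdash^{\mathsf{d}}_T A \equiv B$ as honest equality of domains, which is exactly where the congruence rules \textsf{congBT} and \textsf{cong}$\dpi$ — and with them the undecidable coupling of typing and provability — have to be accommodated.
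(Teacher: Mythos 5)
Your plan founders on its very first ingredient: a Henkin-style model theory for DHOL together with soundness and completeness of the natural deduction calculus. No such semantics exists, and this is not an incidental omission you can patch with a ``routine rule-by-rule check'' and a ``standard Henkin argument'' --- the paper itself emphasizes that there is no formal definition of DHOL semantics yet and that producing one, together with soundness and completeness of $\vdash^{\mathsf{d}}$, is a substantial open task out of scope even for this paper. This is precisely why the statement is quoted from Rothgang et al.~\cite{RRB23}, where it is proved \emph{proof-theoretically}: by induction on HOL derivations, one transforms a HOL proof of $\overline{s}$ (suitably normalized, with a careful analysis of which HOL terms are images or ``quasi-preimages'' of DHOL terms under the erasure) back into a DHOL derivation of $s$. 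Your proposal in effect defers the entire difficulty into two unproven mega-lemmas --- DHOL completeness and the Correspondence Lemma --- each of which is research-level rather than routine; in particular, the term model must interpret types up to the provability-coupled equality $\equiv$ generated by congBT and cong$\dpi$, and nothing in DHOL guarantees that a dependent type is inhabited (the PER $A^*$ may have no reflexive elements over some arguments), so the domains of your would-be models can be empty, breaking the usual Henkin setup.

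There is also a concrete technical flaw inside the Correspondence Lemma as you state it. Clause (i), literal equality $(\overline{s})^{\mathcal N} = s^{\mathcal M}$, cannot hold at function types: as you yourself observe, the full simple function space over $D_a$ is strictly larger than the DHOL one, and interpreting an erased constant $c\colon \overline{A}\to\overline{B}$ requires arbitrarily extending $c^{\mathcal M}$ outside the $A^*$-respecting part, so denotations agree only up to the PER, not on the nose. The correct statement is a PER-indexed logical relation between $\mathcal N$-elements and $\mathcal M$-elements (clause (ii) generalized), with (i) and (iii) as corollaries --- this is essentially the relational invariant that \cite{RRB23} maintains syntactically. So while your semantic route is coherent in outline and would be a genuinely different (and independently interesting) argument if the missing model theory were developed, as it stands the proposal does not constitute a proof: its two load-bearing steps are exactly the open problems the paper's approach is designed to avoid.
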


Note that the erasure treats simple types and dependent types in the
same way.  In the following, we define a post-processing function
$\Phi$ on top of the original erasure \cite{RRB23} which allows us to
erase to simpler but equivalent formulas. The goal of $\Phi$ is to
replace $A^*\: s\: t$ where $A$ is a simple type by $s =_{A} t$. As a
consequence, the guard $A^*\: x\: x$ in
$\overline{\forall x\colon A.s}$ for simple types $A$ can be
removed. The following definition gives a presentation of $\Phi$ as a
pattern rewrite system \cite{MN98}.

\begin{definition}
  \label{def:phi}
  Given a HOL term $s$, we define $\Phi(s)$ to be the HOL term which
  results from applying the following pattern rewrite rules
  exhaustively to all subterms in a bottom-up fashion:
  \begin{align*}
    a^*\: F\: G \quad&\to\quad F =_{a} G \\
    \forall x,y\colon A.\: (x =_{A} y) \limp (F\: x =_{B} G\:y) \quad&\to\quad F =_{A \to B} G \\
    \forall x\colon A.\: (x =_{A} x) \limp F\: x \quad&\to\quad \forall x\colon A.\: F\: x
  \end{align*}
  Here, $F,G$ are free variables for terms, $a^*$ denotes the constant
  for the PER of a simple base type $a$ and $A,B$ are placeholders for
  simple types. Given a HOL theory $T$, there are finitely many
  instances for $a^*$ but infinite choices for $A$ and $B$, so the
  pattern rewrite system is infinite.
\end{definition}

\begin{lemma}
  \label{lem:phi_correct}
  Assume
  $\Gamma \vdash^{\mathsf{d}}_T s\colon o$.
  $\Gamma \vdash^{\mathsf{d}}_T s$ if and only if
  $\overline{\Gamma} \vdash^{\mathsf{s}}_{\overline{T}}
  \Phi(\overline{s})$.
\end{lemma}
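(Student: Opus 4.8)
The plan is first to move the whole question into HOL and then to argue that the post-processing $\Phi$ is invisible to HOL provability. Since $\Gamma\vdash^{\mathsf{d}}_T s\colon o$, \thmref{erasure_soundness} applies, and together with \thmref{erasure_completeness} it yields that $\Gamma\vdash^{\mathsf{d}}_T s$ holds if and only if $\overline\Gamma\vdash^{\mathsf{s}}_{\overline T}\overline s$. Hence it suffices to show that, over the fixed theory $\overline T$ and context $\overline\Gamma$, the sequent $\overline s$ is HOL-derivable exactly when $\Phi(\overline s)$ is. I would establish this by following the bottom-up rewriting of $\Phi$ and checking that each rewrite step preserves HOL-derivability of the surrounding sequent, exploiting that HOL has a Leibniz/congruence principle so that a subterm may be replaced by a suitably equal one inside any context.

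Two of the three rules are unconditional HOL equivalences and cause no trouble. For $\forall x,y\colon A.\,(x=_A y)\limp(F\,x=_B G\,y)\to F=_{A\to B}G$ I would argue left-to-right by instantiating $x=y$, discharging the reflexive guard and invoking functional extensionality, and right-to-left by congruence of $=$ under application. For $\forall x\colon A.\,(x=_A x)\limp F\,x\to\forall x\colon A.\,F\,x$ the guard $x=_A x$ is provable by reflexivity and hence vacuous, so the two formulas are interderivable. In both cases the replaced subterm is provably equal to its reduct in the (possibly binder-extended) local context, so the congruence step goes through.

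The real work is the rule $a^*\,F\,G\to F=_a G$ for a simple base type $a$, and this is where I expect the main obstacle. One implication, $a^*\,F\,G\limp F=_a G$, is literally the axiom $a_{\per}$ carried in $\overline T$; the converse needs the reflexivity instance $a^*\,F\,F$, which fails for arbitrary inhabitants of $a$, so the rewrite cannot be validated subterm-by-subterm in isolation. The point to exploit is that every $a^*$-occurrence touched by $\Phi$ is guarded: one arising from an equation $u=_a v$ in $s$ has well-typed arguments, so $a^*\,\overline u\,\overline u$ is derivable by \thmref{erasure_completeness}, while one arising as the guard of a quantifier $\forall x\colon a.(\dots)$ lives under the hypothesis $a^*\,x\,x$ that the erasure introduces and that $\overline\Gamma$ already records for context variables. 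I would therefore run the argument as an induction on the structure of $s$ that threads the reflexivity assumptions currently in scope, so that each $a^*$-to-$=$ rewrite is backed by an available instance $a^*\,t\,t$; the subsequent deletion of a now-trivial guard by the third rule is then sound, using additionally that simple base types are inhabited (reflected by the nonempty rule) so that dropping the guard does not strengthen the claim beyond what $a_{\per}$ licenses. Making this invariant precise, and checking that it is maintained through every binder of $\overline s$, is the delicate part of the proof.
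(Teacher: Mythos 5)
Your proposal follows essentially the same route as the paper's proof: reduce to HOL via \thmref{erasure_completeness} and \thmref{erasure_soundness}, then verify that each rewrite rule of $\Phi$ preserves provability, where the only delicate point is the converse direction of $a^*\: F\: G \to F =_a G$ and the needed reflexivity instance $a^*\: F\: F$ comes from completeness of the erasure applied to the well-typed DHOL term matched by $F$ --- exactly the paper's argument, which you merely spell out more carefully for occurrences under binders and quantifier guards. One small correction: eliminating the guard in the third rule requires no inhabitation argument, since $x =_A x$ is provable outright by reflexivity (and HOL types are nonempty in any case), so your appeal to the nonempty rule is superfluous.
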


\begin{proof}
  Since the erasure is sound and complete (\thmref{erasure_soundness}
  and \thmref{erasure_completeness}), it suffices to show that
  $\overline{\Gamma} \vdash^{\mathsf{s}}_{\overline{T}}
  \Phi(\overline{s})$ if and only if
  $\overline{\Gamma} \vdash^{\mathsf{s}}_{\overline{T}} \overline{s}$.
  Consider the rules from \defref{phi}. $\Phi(\overline{s})$ is
  well-defined: Clearly, the rules terminate and confluence follows
  from the lack of critical pairs \cite{MN98}.  Hence,
  it is sufficient to prove
  $\smash{\overline{\Gamma} \vdash^{\mathsf{s}}_{\overline{T}} l}$ if and only
  if $\smash{\overline{\Gamma} \vdash^{\mathsf{s}}_{\overline{T}} r}$ for
  every rule in \defref{phi}.  For the first rule, assume
  $\smash{\overline{\Gamma} \vdash^{\mathsf{s}}_{\overline{T}} a^*\: F\: G}$.
  Since $\smash{a_{\per} \in \overline{T}}$, we have
  $\smash{\overline{\Gamma} \vdash^{\mathsf{s}}_{\overline{T}} F =_a G}$.  Now
  assume
  $\smash{\overline{\Gamma} \vdash^{\mathsf{s}}_{\overline{T}} F =_a G}$.
  Since $F$ has type $a$, we obtain
  $\Gamma \vdash^{\mathsf{d}}_T F =_a F$. Completeness of the erasure
  yields $\smash{\overline{\Gamma} \vdash^{\mathsf{s}}_{\overline{T}} a^*\: F\:F}$.
  Now, the assumption allows us to
  replace equals by equals, so we conclude
  $\overline{\Gamma} \vdash^{\mathsf{s}}_{\overline{T}} a^*\: F\: G$.
  The desired result for the second rule follows from extensionality.
  Finally, the third rule is an easy logical simplification. \qed
\end{proof}

Given a theory $T$ (context $\Gamma$) we write
$\Phi(\overline{T})$ ($\Phi(\overline{\Gamma})$) to denote its erased
version where formulas have been simplified with $\Phi$.

\begin{corollary}
  \label{cor:phi_correct}
    Assume
    $\Gamma \vdash^{\mathsf{d}}_T s\colon o$.
    $\Gamma \vdash^{\mathsf{d}}_T s$ if and only if
    $\Phi(\overline{\Gamma}) \vdash^{\mathsf{s}}_{\Phi(\overline{T})}
    \Phi(\overline{s})$.
\end{corollary}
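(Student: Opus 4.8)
The statement differs from \lemref{phi_correct} only in that the erased theory and context on the left of the turnstile are additionally simplified by $\Phi$, while the goal $\Phi(\overline{s})$ stays the same. Hence, by \lemref{phi_correct}, it suffices to establish
\[\overline{\Gamma} \vdash^{\mathsf{s}}_{\overline{T}} \Phi(\overline{s}) \quad\Longleftrightarrow\quad \Phi(\overline{\Gamma}) \vdash^{\mathsf{s}}_{\Phi(\overline{T})} \Phi(\overline{s}).\]
The plan is to show that the two axiom sets prove the same $\Phi$-normal goals, transferring HOL derivations in each direction while reusing the per-rule equivalences already verified in the proof of \lemref{phi_correct}.

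For the direction from right to left I would argue that every simplified axiom $\Phi(\phi)$ of $\Phi(\overline{T})$ (and every assumption of $\Phi(\overline{\Gamma})$) is interderivable with the corresponding original formula $\phi$ over $\overline{T}$. This is exactly the equivalence established in the proof of \lemref{phi_correct}, applied at the very subterm positions produced by the erasure, where the guard $a^*\,F\,F$ needed for rule~1 is available by completeness of the erasure (\thmref{erasure_completeness}). Since each $\phi$ is an axiom of $\overline{T}$ or an assumption of $\overline{\Gamma}$, its simplification $\Phi(\phi)$ is derivable over $\overline{\Gamma}, \overline{T}$, so replacing each invocation of an axiom by its derivation lifts any proof over $\Phi(\overline{\Gamma}), \Phi(\overline{T})$ to one over $\overline{\Gamma}, \overline{T}$.

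The direction from left to right is the delicate one, because $\Phi$ discards information: for a simple base type $a$ the axiom $a_{\per}$ collapses to a tautology and the constant $a^*$ no longer occurs in any formula of $\Phi(\overline{T})$, hence is left entirely unconstrained. I would recover it conservatively. Extend $\Phi(\overline{T})$ by the defining axiom $\forall u,v\colon a.\; a^*\,u\,v \leftrightarrow u =_a v$ for each simple base type $a$; since $a^*$ occurs in no formula of $\Phi(\overline{T})$, this is a conservative extension with respect to any goal not mentioning these constants, in particular $\Phi(\overline{s})$ and $\Phi(\overline{\Gamma})$. Under these definitions all three rewrite rules become \emph{unconditional} provable equivalences (rule~1 by the definition, rules~2 and~3 by function extensionality and reflexivity of $=_A$), so by congruence every original axiom $\phi$ of $\overline{T}$ and $\overline{\Gamma}$ is interderivable with $\Phi(\phi)$ and thereby available. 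Consequently the given derivation of $\Phi(\overline{s})$ over $\overline{\Gamma}, \overline{T}$ transfers to the definitional extension, and conservativity lets us delete the definitions again to obtain $\Phi(\overline{\Gamma}) \vdash^{\mathsf{s}}_{\Phi(\overline{T})} \Phi(\overline{s})$.

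The main obstacle is precisely this left-to-right transfer. In the target theory $\Phi(\overline{T})$ the PER constant $a^*$ of a simple base type is no longer constrained at all, so the equivalence $a^*\,F\,G \leftrightarrow F =_a G$ that \lemref{phi_correct} relies on — available there only through the validity guard supplied by completeness of the erasure — is simply absent and cannot be used as a congruence. Turning it back into an unconditional equivalence forces the conservative definitional extension above, whose legitimacy rests on the observation that $a^*$ occurs in no formula of $\Phi(\overline{T})$ and in neither $\Phi(\overline{s})$ nor $\Phi(\overline{\Gamma})$. The remaining ingredients — the right-to-left lifting and the equivalences for rules~2 and~3 — are routine.
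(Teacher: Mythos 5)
Your proposal is correct, and it is in fact more careful than the paper, which states this corollary without any proof: the intended reading there is that it is immediate from \lemref{phi_correct} together with the convention that $\Phi(\overline{T})$ and $\Phi(\overline{\Gamma})$ simply replace each formula $\phi$ by $\Phi(\phi)$, the per-rule equivalences from the lemma's proof making the two presentations interchangeable. Your right-to-left direction is exactly that implicit argument spelled out. Your genuine contribution is the diagnosis that the naive ``replace each axiom by an interderivable one'' reading is one-sided: $\Phi(\overline{T})$ is strictly weaker than $\overline{T}$ on formulas mentioning the PER constant $a^*$ of a simple base type (the nontrivial content of $a_{\per}$ is lost, and $a^*$ becomes unconstrained), so derivations cannot be transferred left to right by axiom replacement alone; what saves the corollary is that $a^*$ occurs only fully applied in erased formulas and hence is rewritten away everywhere, so neither $\Phi(\overline{T})$, $\Phi(\overline{\Gamma})$ nor the goal $\Phi(\overline{s})$ mentions it. Your conservative definitional extension adding $\forall u,v\colon a.\: a^*\,u\,v \leftrightarrow u =_a v$ handles this correctly (one should note in passing that turning the biconditionals of rules~2 and~3 into $=_o$-equations usable as congruences needs propositional extensionality, which is admissible in this setting, and that for rule~1 under quantifiers the needed reflexivity guards are supplied by the guard structure of erased formulas). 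A shorter route to the same end is semantic: given a Henkin model of $\Phi(\overline{T}),\Phi(\overline{\Gamma})$ refuting $\Phi(\overline{s})$, reinterpret each simple-type $a^*$ as equality on $a$; all three rewrite rules are then truth-preserving, so the modified model satisfies $\overline{T},\overline{\Gamma}$ while still refuting $\Phi(\overline{s})$, and soundness and completeness of $\vdash^{\mathsf{s}}$ conclude. In short: the paper buys brevity by glossing over an asymmetry that your proof correctly identifies and repairs.
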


\begin{example}
  Consider again the axiom recursively defining $\app$ from
  \exaref{plus_app}
  \begin{align*}
    & \forall n,m\colon\nat,z\colon\elem,x\colon\lst\: n,y\colon\lst\: m. \\
    & \hspace{.6cm}\app\: (\suc\: n)\: m\: (\cons\: n\: z\: x)\: y =_{\lst\: (\suc\: (\plus\: n\: m))} \cons\: (\plus\: n\: m)\: z\: (\app\: n\: m\: x\: y)
  \end{align*}
  which we refer to as $s_{\app}$. Its post-processed erasure
  $\Phi(\overline{s_{\app}})$ is given by the following formula
  which is simpler than $\overline{s_{\app}}$:
  \begin{align*}
    & \forall n,m\colon\nat,z\colon\elem, x\colon\lst.\: \lst^*\: n\: x\: x \limp 
    \forall y\colon\lst.\: \lst^*\: m\: y\: y \limp {} \lst^*\: \bigl(\suc\: (\plus\: n\: m)\bigr)\\
    & \hspace{.6cm}\bigl(\app\: (\suc\: n)\: m\: (\cons\: n\: z\: x)\: y\bigr)\: \bigl(\cons\: (\plus\: n\: m)\: z\: (\app\: n\: m\: x\: y)\bigr)
  \end{align*}                                                         
\end{example}

\section{Tableau Calculus for DHOL}
\label{sec:calculus}

\subsection{Rules}

The tableau calculus from \cite{BS10,BB11} is the basis of Satallax
\cite{B12} and its fork Lash \cite{BK22}. We present an extension of
this calculus from HOL to DHOL by extending the rules to DHOL as well
as providing tableau rules for the translation from DHOL to HOL. A
\emph{branch} is a 3-tuple $(T,\Gamma,\Gamma')$ which is
\emph{well-formed} if $\vdash^{\mathsf{d}} T\:\thy$,
$\vdash^{\mathsf{d}}_T \Gamma\:\ctx$ and
$\vdash^{\mathsf{s}}_{\smash{\Phi(\overline{T})}} \Gamma'\:\ctx$.
Intuitively, the theory contains the original problem and remains
untouched while the contexts grow by the application of
rules. Furthermore, DHOL and HOL are represented separately: For DHOL,
the theory $T$ and context $\Gamma$ are used while HOL has a separate
context $\Gamma'$ with respect to the underlying theory
$\Phi(\overline{T})$. In particular, each rule in \figref{trules}
really stands for two rules: one that operates in DHOL and the
original version that operates in HOL. Except for the erasure rules
$\erar$ and $\erbr$ which add formulas to the HOL context based on
information from the DHOL theory and context, the rules always stay in
DHOL or HOL, respectively. More formally, a \emph{step} is an
$n+1$-tuple
$\langle
(T,\Gamma,\Gamma'),(T,\Gamma_1,\Gamma'_1),\dots,(T,\Gamma_n,\Gamma'_n)
\rangle$ of branches where $\bot \not\in T,\Gamma,\Gamma'$ and either
$\Gamma \subset \Gamma_i$ and $\Gamma' = \Gamma_i'$ for all
$1 \leq i \leq n$ or $\Gamma = \Gamma_i$ and
$\Gamma' \subset \Gamma_i'$ for all $1 \leq i \leq n$. Given a step
$\langle A, A_1,\dots,A_n \rangle$, the branch $A$ is called its
\emph{head} and each $A_i$ is an \emph{alternative}.

A \emph{rule} is a set of steps defined by a schema. For example, the
rule $\impr$ from \figref{trules} indicates the set of steps
$\langle
(T,\Gamma,\Gamma'),(T,\Gamma_1,\Gamma'_1),(T,\Gamma_2,\Gamma'_2)
\rangle$ where $\bot \not\in T,\Gamma,\Gamma'$ and either
$s \limp t \in T,\Gamma$ or
$s \limp t \in \Phi(\overline{T}),\Gamma'$.  In the former case, we
have $\Gamma_1 = \Gamma,\lnot s$ and $\Gamma_2 = \Gamma,t$ as well as
$\Gamma' = \Gamma'_1 = \Gamma'_2$.  The latter case is the same but
with the primed and unprimed variants swapped.

In the original tableau calculus \cite{BS10,BB11}, normalization is
defined with respect to an axiomatized generic operator $[\cdot]$. As
one would expect, one of these axioms states that the operator does
not change the semantics of a given term. Since there is no formal
definition of DHOL semantics yet, we simply use $[s]$ to denote the
$\beta\eta$-normal form of $s$ which is in accordance with our
implementation.

A rule \emph{applies} to a branch $A$ if some step in the rule has $A$
as its head. A tableau calculus is a set of steps. Let $\xT$ be the
tableau calculus defined by the rules in \figref{trules}.  The side
condition of freshness in $\nallr$ means that for a given step with
head $(T,\Gamma,\Gamma')$ there is no type $A$ such that
$y\colon A \in T,\Gamma$ or $y\colon A \in \Phi(\overline{T}),\Gamma'$
and we additionally require that there is no name $x$ such that
$\lnot [s\: x] \in T,\Gamma$ or
$\lnot [s\: x] \in \Phi(\overline{T}),\Gamma'$.  In practice, this
means that to every formula, $\nallr$ can be applied at most
once. Furthermore, the side condition $t\colon A$ in the rule $\allr$
means that either $\Gamma \vdash^{\mathsf{d}}_T t\colon A$ or
$\Gamma' \vdash^{\mathsf{s}}_{\smash{\Phi(\overline{T})}} t\colon A$ depending
on whether the premise is in $T,\Gamma$ or
$\Phi(\overline{T}),\Gamma'$. The side condition
$\overline{s}\colon o$ in the rule $\erar$ means that
$\Gamma' \vdash^{\mathsf{s}}_{\smash{\Phi(\overline{T})}} \overline{s}\colon
o$. This is to prevent application of $\erar$ before the necessary
type information is obtained by applying $\erbr$.

The set of \emph{$\xT$-refutable} branches is defined inductively: If
$\bot \in T,\Gamma,\Gamma'$, then $(T,\Gamma,\Gamma')$ is
refutable. If $\langle A,A_1,\dots,A_n \rangle$ is a step in $\xT$ and
every alternative $A_i$ is refutable, then $A$ is refutable.

\begin{figure}[t]
\addtolength{\jot}{1ex}
\begin{gather*}
         \negr\  \frac{s, \lnot s}{\bot}
  \qquad \neqr\  \frac{s \neq_{a t_1 \dots t_n} s}{\bot}
  \qquad \dnegr\ \frac{\lnot\lnot s}{s}
  \qquad \impr\  \frac{s \limp t}{\lnot s \mid t}
  \qquad \nimpr\ \frac{\lnot(s \limp t)}{s,\lnot t}
  \\     \allr\  \frac{\forall x\colon A.s}{[s[x/t]]}\
                 t\colon A
  \qquad \nallr\ \frac{\lnot\forall x\colon A.s}{y\colon A,\lnot[s[x/y]]}\
                 y \text{ fresh}
  \qquad \ber\   \frac{s \neq_{o} t}{s,\lnot t \mid \lnot s,t}
  \\     \bqr\   \frac{s =_{o} t}{s,t \mid \lnot s,\lnot t}
  \quad\: \fer\   \frac{s \neq_{\dpi x\colon A.B} t}{\lnot[\forall x.\:sx=tx]}\
                 x \not\in \xV s \cup \xV t
  \quad\: \fqr\   \frac{s =_{\dpi x\colon A.B} t}{[\forall x.\:sx=tx]}\
                 x \not\in \xV s \cup \xV t
  \\     \matr\  \frac{x\: s_1\:\dots\: s_n,\lnot\: x\: t_1\:\dots\: t_n}
                      {s_1 \neq t_1 \mid \cdots \mid s_n \neq t_n}\
                 n \geq 1
  \qquad \decr\  \frac{x\: s_1\:\dots\: s_n \neq_{a u_1 \dots u_n} x\: t_1 \:\dots\: t_n}
                      {s_1 \neq t_1 \mid \cdots \mid s_n \neq t_n}\
                 n \geq 1
  \\     \conr\  \frac{s =_{a t_1\dots t_n} t, u \neq_{a t_1 \dots t_n} v}
                      {s \neq u, t \neq u \mid s \neq v, t \neq v}
  \qquad \erar\  \frac{s}
                      {[\Phi(\overline{s})]}\
                 \overline{s}\colon o
  \qquad \erbr\  \frac{x\colon A}
                      {\ds x\colon \overline{A}, A^*\: x\: x}
\end{gather*}
\caption{Tableau rules for DHOL}
\label{fig:trules}
\end{figure}

The rules in \figref{trules} strongly resemble the tableau calculus
from \cite{BB11}. In order to support DHOL, we replaced simple types
by their dependent counterparts. To that end, we tried to remain as
simple as possible by only allowing syntactically equivalent types in
$\allr$ and $\conr$: Adding a statement like $A \equiv A'$ as a
premise would change the tableau calculus as well as the automated
proof search significantly, so these situations are handled by the
erasure for which the additional rules $\erar$, $\erbr$ are
responsible.

It is known that the restriction of $\xT$ to HOL (without $\erar$ and
$\erbr$) is sound and complete with respect to Henkin semantics
\cite{BS10,BB11}. Furthermore, due to \corref{phi_correct}, the
rules $\erar$ and $\erbr$ define a sound and complete translation from
DHOL to HOL with respect to Rothgang et al.'s definition of
provability in DHOL \cite{RRB23}.

\subsection{Soundness and Completeness}

In general, a soundness result based on the refutability of a branch
$(T,\Gamma,\Gamma')$ is desirable. If there were a definition of
semantics for DHOL which is a conservative extension of Henkin
semantics, the proof could just refer to satisfiability of
$T,\Gamma,\Gamma'$. Unfortunately, this is not the case.  Note that an
appropriate definition of semantics is out of the scope of this paper:
In addition to its conception, we would have to prove soundness and
completeness of $\vdash^{\mathsf{d}}$ on top of the corresponding
proofs for our novel tableau calculus. Therefore, soundness and
completeness of the tableau calculus will be established with respect
to provability in DHOL or HOL.  Unfortunately, this requirement
complicates the proof tremendously as a refutation can contain a
mixture of DHOL, erasure and HOL rules. Therefore, we have to consider
both HOL and DHOL and need to establish a correspondence between
$\Gamma$ and $\Gamma'$ which is difficult to put succinctly and seems
to be impossible without further restricting the notion of a
well-formed branch.  Therefore, we prove soundness and completeness
with respect to a notion of refutability which has three stages: At
the beginning, only DHOL rules are applied, the second stage is solely
for the erasure and in the last phase, only HOL rules are
applied. Note that this notion of refutability includes the sound but
incomplete strategy of only using native DHOL rules as well as the
sound and complete strategy of exclusively working with the erasure.

\begin{definition}
  A branch $(T,\Gamma,\Gamma')$ is \emph{s-refutable} if it is
  refutable with respect to the HOL rules.
\end{definition}

\begin{lemma}
  \label{lem:s_refut}
  A well-formed branch $(T,\Gamma,\Gamma')$ is s-refutable $\iff$
  $\Gamma' \vdash^{\mathsf{s}}_{\Phi(\overline{T})} \bot$.
\end{lemma}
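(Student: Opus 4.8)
The plan is to reduce the claim to the known soundness and completeness of the underlying HOL tableau calculus, combined with Henkin completeness of the HOL proof system $\vdash^{\mathsf{s}}$. The first observation is that s-refutability concerns only the HOL rules, and by construction these never inspect the DHOL components $T$ and $\Gamma$: they read their premises from $\Phi(\overline{T}),\Gamma'$ and only ever extend $\Gamma'$, leaving $T$ and $\Gamma$ untouched (the erasure rules $\erar$ and $\erbr$, which are the only rules that cross from DHOL to HOL, are excluded from s-refutability). Since the head of every step satisfies $\bot \notin T,\Gamma,\Gamma'$, any leaf closing an s-refutation must therefore have $\bot \in \Gamma'$. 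Consequently, s-refutability of $(T,\Gamma,\Gamma')$ coincides exactly with refutability of $\Gamma'$ over the theory $\Phi(\overline{T})$ in the original HOL tableau calculus of \cite{BS10,BB11}, the DHOL parts being merely carried along.

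With this reduction in place, I would argue both directions through Henkin semantics. For the forward direction, assume the branch is s-refutable. By soundness of the HOL tableau calculus with respect to Henkin semantics, no Henkin model of $\Phi(\overline{T})$ satisfies all formulas of $\Gamma'$; that is, $\Gamma'$ is unsatisfiable over $\Phi(\overline{T})$. By Henkin completeness of $\vdash^{\mathsf{s}}$, unsatisfiability yields $\Gamma' \vdash^{\mathsf{s}}_{\Phi(\overline{T})} \bot$. For the converse, assume $\Gamma' \vdash^{\mathsf{s}}_{\Phi(\overline{T})} \bot$. By soundness of $\vdash^{\mathsf{s}}$ with respect to Henkin semantics, $\Gamma'$ is unsatisfiable over $\Phi(\overline{T})$, and by completeness of the HOL tableau calculus the branch is refutable using only HOL rules, hence s-refutable.

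The main obstacle will be to reconcile the precise form in which soundness and completeness are stated in \cite{BS10,BB11} with the notion of refutability used here. Those results are phrased model-theoretically over a set of formulas, whereas our branches separate a theory $\Phi(\overline{T})$ from a context $\Gamma'$ and define refutability inductively through the step relation. I would therefore have to check that the formula set available to the HOL rules, namely the axioms of $\Phi(\overline{T})$ together with the formulas of $\Gamma'$, is exactly the set to which the cited Henkin result applies, and that well-formedness of the branch, in particular $\vdash^{\mathsf{s}}_{\Phi(\overline{T})} \Gamma'\:\ctx$, guarantees that all these formulas have type $o$. Once this matching is established, each implication reduces to chaining the tableau equivalence (soundness/completeness) with the Henkin equivalence (completeness/soundness), which is routine.
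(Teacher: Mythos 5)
Your proposal is correct and takes essentially the same route as the paper, whose entire proof is the single sentence ``Immediate from soundness and completeness of the original HOL calculus as well as soundness and completeness of $\vdash^{\mathsf{s}}$.'' You merely spell out the two steps the paper leaves implicit: that s-refutability of $(T,\Gamma,\Gamma')$ amounts to HOL tableau refutability of $\Gamma'$ over $\Phi(\overline{T})$ with the DHOL components carried along inertly, and that Henkin semantics is the bridge chaining tableau soundness/completeness with completeness/soundness of $\vdash^{\mathsf{s}}$.
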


\begin{proof}
  Immediate from soundness and completeness of the original HOL
  calculus as well as soundness and completeness of
  $\vdash^{\mathsf{s}}$. \qed
\end{proof}

\begin{definition}
  The set of \emph{e-refutable} branches is inductively defined as
  follows: If $(T,\Gamma,\Gamma')$ is s-refutable and
  $\Gamma' \subseteq \Phi(\overline{\Gamma})$, then it is e-refutable.
  If $\langle A,A_1 \rangle \in \erar \cup \erbr$ and $A_1$ is
  e-refutable, then $A$ is e-refutable.
\end{definition}

\begin{lemma}
  \label{lem:e_refut}
  If $(T,\Gamma,\Gamma')$ is well-formed and e-refutable then
  $\Phi(\overline{\Gamma}) \vdash^{\mathsf{s}}_{\Phi(\overline{T})}
  \bot$.
\end{lemma}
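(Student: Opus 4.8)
The plan is to proceed by induction on the derivation witnessing that $(T,\Gamma,\Gamma')$ is e-refutable. The observation that guides the whole argument is that the target statement $\Phi(\overline{\Gamma}) \vdash^{\mathsf{s}}_{\Phi(\overline{T})} \bot$ mentions only the theory $T$ and the DHOL context $\Gamma$, never the HOL context $\Gamma'$. Since both erasure rules $\erar$ and $\erbr$ leave $T$ and $\Gamma$ untouched and merely extend $\Gamma'$, the goal is literally invariant along erasure steps, so the induction should collapse onto its base case.

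For the base case, $(T,\Gamma,\Gamma')$ is s-refutable and $\Gamma' \subseteq \Phi(\overline{\Gamma})$. By well-formedness and \lemref{s_refut}, s-refutability yields $\Gamma' \vdash^{\mathsf{s}}_{\Phi(\overline{T})} \bot$, and it remains only to weaken this derivation to the larger context. First I would note that $\Phi(\overline{\Gamma})$ is a well-formed HOL context over $\Phi(\overline{T})$: this follows from $\vdash^{\mathsf{d}}_T \Gamma\:\ctx$ (part of well-formedness) together with completeness of the erasure (\thmref{erasure_completeness}) and \corref{phi_correct}, which guarantee that each erased and $\Phi$-simplified declaration is well-typed. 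Monotonicity of HOL provability under a well-formed extension of the context, combined with $\Gamma' \subseteq \Phi(\overline{\Gamma})$, then gives $\Phi(\overline{\Gamma}) \vdash^{\mathsf{s}}_{\Phi(\overline{T})} \bot$.

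For the inductive step, suppose $A = (T,\Gamma,\Gamma')$ is e-refutable because $\langle A, A_1\rangle \in \erar \cup \erbr$ and $A_1$ is e-refutable, where $A_1 = (T,\Gamma,\Gamma'_1)$ with $\Gamma' \subseteq \Gamma'_1$. To invoke the induction hypothesis I must first check that $A_1$ is itself well-formed; since it shares $T$ and $\Gamma$ with the well-formed branch $A$, only $\vdash^{\mathsf{s}}_{\Phi(\overline{T})} \Gamma'_1\:\ctx$ needs verification. For $\erbr$, a declaration $x\colon B \in T,\Gamma$ produces the entries $x\colon\overline{B}$ and $B^*\: x\: x$, which are well-typed because the erasure sends well-formed DHOL declarations to well-formed HOL ones (\thmref{erasure_completeness}); for $\erar$, the side condition $\overline{s}\colon o$ together with type-preservation of $\Phi$ and of $\beta\eta$-normalization ensures that $[\Phi(\overline{s})]$ has type $o$. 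Once $A_1$ is known to be well-formed, the induction hypothesis gives $\Phi(\overline{\Gamma}) \vdash^{\mathsf{s}}_{\Phi(\overline{T})} \bot$, which by the invariance observation is precisely the conclusion for $A$.

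The main obstacle is not logical content but well-formedness bookkeeping: one must verify that applying an erasure rule preserves well-formedness of the branch (so the induction hypothesis applies) and that $\Phi(\overline{\Gamma})$ is a legitimate HOL context extending $\Gamma'$ (so weakening is sound in the base case). Both points hinge on the type-preservation properties of the erasure and of $\Phi$, which are available from \thmref{erasure_completeness}, \lemref{phi_correct}, and \corref{phi_correct}.
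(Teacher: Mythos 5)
Your proposal is correct and follows essentially the same route as the paper: induction on e-refutability, with the base case handled by \lemref{s_refut} plus weakening along $\Gamma' \subseteq \Phi(\overline{\Gamma})$, and the inductive step exploiting that $\erar$ and $\erbr$ leave $T$ and $\Gamma$ unchanged so the conclusion passes through the induction hypothesis unaltered. You merely spell out the well-formedness bookkeeping (that $\Phi(\overline{\Gamma})$ is a legitimate HOL context and that each erasure step preserves well-formedness of the branch) which the paper states without elaboration.
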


\begin{proof}
  Let $(T,\Gamma,\Gamma')$ be well-formed and e-refutable. We proceed
  by induction on the definition of e-refutability.  If
  $(T,\Gamma,\Gamma')$ is s-refutable then
  $\Gamma' \vdash^{\mathsf{s}}_{\smash{\Phi(\overline{T})}} \bot$ by
  \lemref{s_refut}.  Since $\Gamma' \subseteq \Phi(\overline{\Gamma})$
  we also have
  $\Phi(\overline{\Gamma}) \vdash^{\mathsf{s}}_{\smash{\Phi(\overline{T})}}
  \bot$.  For the induction step, let
  $\langle (T,\Gamma,\Gamma'),(T,\Gamma,\Gamma'_1)\rangle$ be a step
  with either $\erar$ or $\erbr$ and assume that the branch
  $(T,\Gamma,\Gamma'_1)$ is e-refutable. Since well-formedness of
  $(T,\Gamma,\Gamma'_1)$ follows from the well-formedness of
  $(T,\Gamma,\Gamma')$, the induction hypothesis yields
  $\Phi(\overline{\Gamma}) \vdash^{\mathsf{s}}_{\smash{\Phi(\overline{T})}}
  \bot$ as desired. \qed
\end{proof}

\begin{definition}
  The set of \emph{d-refutable} branches is inductively defined as
  follows: If $(T,\Gamma,\cdot)$ is e-refutable or
  $\bot \in T,\Gamma$, then it is d-refutable.  If
  $\langle A,A_1,\dots,A_n \rangle \in \xT \setminus (\erar \cup
  \erbr)$ and every alternative $A_i$ is d-refutable, then $A$ is
  d-refutable.
\end{definition}

Next, we have to prove soundness of every DHOL rule. For most of the
rules, this is rather straightforward. We show soundness of $\fer$,
$\fqr$ and $\decr$ as representative cases and start with an auxiliary
lemma.

\begin{lemma}
  \label{lem:bracket}
  Assume $\Gamma \vdash^{\mathsf{d}}_T s\colon o$. We have
  $\Gamma \vdash^{\mathsf{d}}_T s$ if and only if
  $\Gamma \vdash^{\mathsf{d}}_T [s]$.
\end{lemma}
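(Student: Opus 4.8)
The plan is to reduce both implications to a single equality fact, namely that $s$ and its normal form are provably equal booleans: $\Gamma \vdash^{\mathsf{d}}_T s =_o [s]$. Once this is available, both directions follow immediately from the rule $\mathrm{cong}{\vdash}$. For the forward direction I instantiate $\mathrm{cong}{\vdash}$ with conclusion $[s]$: from $\Gamma \vdash^{\mathsf{d}}_T [s] =_o s$ (obtained from $\Gamma \vdash^{\mathsf{d}}_T s =_o [s]$ by $\mathrm{sym}$) together with the hypothesis $\Gamma \vdash^{\mathsf{d}}_T s$ I conclude $\Gamma \vdash^{\mathsf{d}}_T [s]$. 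For the backward direction I instantiate $\mathrm{cong}{\vdash}$ with conclusion $s$: from $\Gamma \vdash^{\mathsf{d}}_T s =_o [s]$ together with the hypothesis $\Gamma \vdash^{\mathsf{d}}_T [s]$ I conclude $\Gamma \vdash^{\mathsf{d}}_T s$. Both uses are well-typed since $\beta\eta$-reduction preserves typing, so $\Gamma \vdash^{\mathsf{d}}_T [s]\colon o$ and the equation $s =_o [s]$ can be formed by $\text{=type}$.

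It remains to establish $\Gamma \vdash^{\mathsf{d}}_T s =_o [s]$, and here I would route through the erasure rather than replay $\beta\eta$-conversion inside DHOL. By \thmref{erasure_soundness} (second bullet) it suffices to exhibit $\Gamma \vdash^{\mathsf{d}}_T s\colon o$ (given), $\Gamma \vdash^{\mathsf{d}}_T [s]\colon o$ (subject reduction), and $\overline{\Gamma} \vdash^{\mathsf{s}}_{\overline{T}} o^*\,\overline{s}\,\overline{[s]}$. By definition $o^*\,\overline{s}\,\overline{[s]}$ is just $\overline{s} =_o \overline{[s]}$, so the remaining HOL goal is that $\overline{s}$ and $\overline{[s]}$ are provably equal. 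This holds because the erasure is a term homomorphism that commutes with capture-avoiding substitution and preserves free-variable sets: a DHOL $\beta$-redex $(\lambda x\colon A.u)\,v$ erases to the HOL $\beta$-redex $(\lambda x\colon\overline{A}.\overline{u})\,\overline{v}$ whose contractum $\overline{u}[x/\overline{v}]$ equals $\overline{u[x/v]}$, and a DHOL $\eta$-redex $\lambda x\colon A.\,s\,x$ with $x \notin \xV s$ erases to $\lambda x\colon\overline{A}.\,\overline{s}\,x$ with $x \notin \xV{\overline{s}}$. Consequently the reduction $s \twoheadrightarrow_{\beta\eta} [s]$ is mirrored by a reduction $\overline{s} \twoheadrightarrow_{\beta\eta} \overline{[s]}$ in HOL, so $\overline{s}$ and $\overline{[s]}$ are $\beta\eta$-convertible and hence $\overline{\Gamma} \vdash^{\mathsf{s}}_{\overline{T}} \overline{s} =_o \overline{[s]}$ holds by reflexivity modulo conversion. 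Applying \thmref{erasure_soundness} then yields $\Gamma \vdash^{\mathsf{d}}_T s =_o [s]$.

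The main obstacle is the bookkeeping linking DHOL reduction to HOL reduction: one has to verify that erasure commutes with substitution and respects the $\eta$ side condition $x \notin \xV s$, together with subject reduction for $\vdash^{\mathsf{d}}$, all of which are routine but must be stated. I note that a fully native alternative avoids the erasure and instead proves the congruence lemma ``$\beta\eta$-convertible well-typed terms of the same type are provably equal'' directly from $\mathrm{beta}$, $\mathrm{eta}$, $\mathrm{refl}$, $\mathrm{sym}$ and the compatibility rules; its difficulty is concentrated in deriving transitivity of $=_A$ and compatibility of the connectives $\lnot$ and $\limp$, which are not primitive rules of \figref{dholndcalculus}. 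These very facts can themselves be discharged through the erasure (a provable DHOL equality $s =_A t$ corresponds by \thmref{erasure_completeness} to the HOL statement $A^*\,\overline{s}\,\overline{t}$, and $A^*$ is a transitive PER), which is why I prefer the erasure-based argument above.
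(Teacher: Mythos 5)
Your proof is correct and shares the paper's top-level skeleton --- establish the key fact $\Gamma \vdash^{\mathsf{d}}_T s =_o [s]$ and then apply cong$\vdash$ (with sym for one direction) --- but you justify the key fact by a genuinely different route. The paper's entire proof is one line: ``by the beta and eta rules, we have $\Gamma \vdash^{\mathsf{d}}_T s =_o [s]$,'' which implicitly appeals to the congruence rules of \figref{dholndcalculus} (cong$\lambda$, congAppl, cong$\forall$, congBT) plus admissible transitivity to propagate $\beta\eta$-steps to arbitrary subterm positions; as you correctly observe, this native route also quietly needs compatibility of $\lnot$ and $\limp$ with $=_o$, which are not primitive rules, so the paper's one-liner conceals some work. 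Your erasure-based detour via the second bullet of \thmref{erasure_soundness} makes that hidden work explicit and discharges it in HOL instead, where $\beta\eta$-convertible terms are trivially provably equal; the cost is the simulation bookkeeping (erasure commutes with substitution, preserves the $\eta$ side condition, subject reduction for $\vdash^{\mathsf{d}}$) that you duly flag. Two small inaccuracies in your write-up, neither fatal: the erasure is \emph{not} a plain term homomorphism on $\forall x\colon A.s$ and $s =_A t$ (these translate to guarded formulas and PER applications), so the step-mirroring claim needs the additional observation that $A^*$ is compositional in the erased term arguments of $A$; and a redex occurring inside the type annotation of a $\lambda$ is erased away entirely (the annotation becomes $\overline{A}$, dropping term arguments), so such a DHOL step maps to \emph{zero} HOL steps rather than one --- harmless for convertibility, but it means the simulation is weak rather than step-for-step. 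Your closing remark that the native alternative's missing ingredients (transitivity, propositional compatibilities) can themselves be recovered through the erasure matches how the paper operates, since it cites transitivity only as an admissible rule from the extended version of \cite{RRB23}.
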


\begin{proof}
  By the beta and eta rules, we have
  $\Gamma \vdash^{\mathsf{d}}_T s =_o [s]$.  Using cong$\vdash$ we
  obtain the desired result in both directions. \qed
\end{proof}

\begin{lemma}[$\fer$]
  \label{lem:soundness_fe}
  Let $(T,\Gamma,\Gamma')$ be a well-formed branch.  Choose $x$ such
  that $x \not\in \xV s \cup \xV t$ and assume
  $s \neq_{\dpi x\colon A.B} t \in T,\Gamma$.  If
  $\Gamma,\lnot [\forall x\colon A. sx = tx] \vdash^{\mathsf{d}}_T
  \bot$ then $\Gamma \vdash^{\mathsf{d}}_T \bot$.
\end{lemma}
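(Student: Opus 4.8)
The plan is to reduce the soundness of $\fer$ to functional extensionality in DHOL: the single alternative only adds the hypothesis $\lnot[\forall x\colon A.\, sx =_B tx]$ to the context, so it suffices to show that this hypothesis is already derivable from $\Gamma$. Writing $\varphi = \forall x\colon A.\, sx =_B tx$, I would first prove $\Gamma \vdash^{\mathsf{d}}_T \lnot\varphi$, then transfer this to the normalized form $\lnot[\varphi]$ via \lemref{bracket}, and finally cut it against the assumed refutation $\Gamma,\lnot[\varphi] \vdash^{\mathsf{d}}_T \bot$.

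Before anything else I would extract the necessary typing information from well-formedness of the branch. Since $s \neq_{\dpi x\colon A.B} t$ occurs in $T,\Gamma$ and the branch is well-formed, the entry $\lnot(s =_{\dpi x\colon A.B} t)$ had to be well-typed, so by inverting $\lnot$type and =type we recover $\Gamma \vdash^{\mathsf{d}}_T s\colon \dpi x\colon A.B$ and $\Gamma \vdash^{\mathsf{d}}_T t\colon \dpi x\colon A.B$; the same membership gives $\Gamma \vdash^{\mathsf{d}}_T \lnot(s =_{\dpi x\colon A.B} t)$ through the axiom or assume rule. These typing facts also yield $\Gamma \vdash^{\mathsf{d}}_T \varphi\colon o$ (using appl, =type and $\forall$type), which is needed for the introduction rules below.

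The mathematical core is the functional-extensionality derivation $\Gamma,\varphi \vdash^{\mathsf{d}}_T s =_{\dpi x\colon A.B} t$. Here I would apply the eta rule — legitimate precisely because $x \notin \xV s \cup \xV t$ — to obtain $s =_{\dpi x\colon A.B} \lambda x\colon A.\, sx$ and $t =_{\dpi x\colon A.B} \lambda x\colon A.\, tx$. Instantiating the hypothesis $\varphi$ at the (lambda-bound) variable $x$ via $\forall$e gives $\Gamma,\varphi,x\colon A \vdash^{\mathsf{d}}_T sx =_B tx$, and cong$\lambda$ (with the routine type-equality premise $A \equiv A$) then yields $\lambda x\colon A.\, sx =_{\dpi x\colon A.B} \lambda x\colon A.\, tx$. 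Chaining these three equalities with sym and transitivity of typed equality produces $s =_{\dpi x\colon A.B} t$. With $\lnot(s = t)$ weakened into the same context, $\lnot$e gives $\Gamma,\varphi \vdash^{\mathsf{d}}_T \bot$, whence $\lnot$i delivers $\Gamma \vdash^{\mathsf{d}}_T \lnot\varphi$. \lemref{bracket} upgrades this to $\Gamma \vdash^{\mathsf{d}}_T \lnot[\varphi]$, since $\beta\eta$-normalization commutes with the leading $\lnot$. Finally, from the assumed $\Gamma,\lnot[\varphi] \vdash^{\mathsf{d}}_T \bot$ I would apply $\limp$i and then $\limp$e against $\Gamma \vdash^{\mathsf{d}}_T \lnot[\varphi]$ to conclude $\Gamma \vdash^{\mathsf{d}}_T \bot$.

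The main obstacle is that transitivity of typed equality is not a primitive rule of the calculus in \figref{dholndcalculus}, so the three-step chain in the extensionality argument relies on it as a derived rule; I would establish it (together with the auxiliary type-equality reflexivity $A \equiv A$ consumed by cong$\lambda$) by the standard replace-equals-by-equals argument that the paper already invokes in the proof of \lemref{phi_correct}. The remaining side conditions — inverting well-typedness to recover the types of $s$ and $t$, and weakening the disequality into the extended context — are routine structural properties.
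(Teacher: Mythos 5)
Your proposal is correct and takes essentially the same route as the paper's proof: the mathematical core is the identical extensionality chain ($\forall$e on the universal formula, cong$\lambda$, eta justified by $x \not\in \xV s \cup \xV t$, sym, and the admissible rule trans from \cite{RRB23ext}, all mediated by \lemref{bracket}). The only difference is bookkeeping: you derive $\Gamma \vdash^{\mathsf{d}}_T \lnot[\forall x\colon A.\, sx = tx]$ outright and cut it against the assumed refutation via ${\limp}$i/${\limp}$e, whereas the paper extracts $\Gamma \vdash^{\mathsf{d}}_T \forall x\colon A.\, sx =_B tx$ from that refutation (via $\lnot$i, $\lnot\lnot$e and \lemref{bracket}) and closes directly with $\lnot$e.
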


\begin{proof}
  From the assumptions and \lemref{bracket}, we obtain
  $\Gamma \vdash^{\mathsf{d}}_T s \neq_{\dpi x\colon A.B} t$ and
  $\Gamma \vdash^{\mathsf{d}}_T \forall x\colon A. sx =_B tx$.
  Furthermore, an application of $\forall$e yields
  $\Gamma, x\colon A \vdash^{\mathsf{d}}_T sx =_B tx$.  Using
  cong$\lambda$, we get
  $\Gamma \vdash^{\mathsf{d}}_T (\lambda x\colon A.sx) =_{\dpi x\colon
    A.B} (\lambda x\colon A.tx)$.  Hence, we can apply eta
  ($x \not\in \xV s \cup \xV t$) , sym and the admissible rule trans
  \cite{RRB23ext} which says that equality is transitive to get
  $\Gamma \vdash^{\mathsf{d}}_T s =_{\dpi x\colon A.B} t$ and
  therefore $\Gamma \vdash^{\mathsf{d}}_T \bot$. \qed
\end{proof}

\begin{lemma}[$\fqr$]
  \label{lem:soundness_fq}
  Let $(T,\Gamma,\Gamma')$ be a well-formed branch.  Assume
  $s =_{\dpi x\colon A.B} t \in T,\Gamma$ and
  $x \not\in \xV s \cup \xV t$.  If
  $\Gamma,[\forall x\colon A. sx = tx] \vdash^{\mathsf{d}}_T \bot$
  then $\Gamma \vdash^{\mathsf{d}}_T \bot$.
\end{lemma}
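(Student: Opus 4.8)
The plan is to exploit that the $\fqr$ rule adds to the context a formula, namely $[\forall x\colon A.\: sx = tx]$, which is already derivable from the given equality $s =_{\dpi x\colon A.B} t$; once this formula has been derived, the added hypothesis can be discharged against the refutation we are handed. This is the positive dual of \lemref{soundness_fe} and is in fact slightly simpler, since no appeal to eta, sym or trans is needed.

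First I would extract $\Gamma \vdash^{\mathsf{d}}_T s =_{\dpi x\colon A.B} t$ from the membership $s =_{\dpi x\colon A.B} t \in T,\Gamma$ using the axiom or assume rule together with well-formedness of the branch. Next, since $x \not\in \xV s \cup \xV t$, I introduce the fresh variable $x\colon A$ into the context and obtain $\Gamma, x\colon A \vdash^{\mathsf{d}}_T x =_A x$ by refl. Weakening the equality to this extended context and applying congAppl to these two premises yields $\Gamma, x\colon A \vdash^{\mathsf{d}}_T sx =_{B[x/x]} tx$; as $B[x/x] = B$, this is exactly $\Gamma, x\colon A \vdash^{\mathsf{d}}_T sx =_B tx$. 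An application of $\forall$i then gives $\Gamma \vdash^{\mathsf{d}}_T \forall x\colon A.\: sx =_B tx$, and \lemref{bracket} upgrades this to the normalized form $\Gamma \vdash^{\mathsf{d}}_T [\forall x\colon A.\: sx = tx]$.

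It remains to discharge the added hypothesis. From the assumption $\Gamma,[\forall x\colon A.\: sx = tx] \vdash^{\mathsf{d}}_T \bot$ and the well-typedness of the formula, $\lnot$i gives $\Gamma \vdash^{\mathsf{d}}_T \lnot[\forall x\colon A.\: sx = tx]$; combining this with the formula just derived via $\lnot$e produces $\Gamma \vdash^{\mathsf{d}}_T \bot$, as required.

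I do not anticipate a genuine obstacle here. The only point requiring care is the bookkeeping around the bound variable of $\dpi x\colon A.B$ and the fresh variable used for instantiation, which is harmless precisely because $x \not\in \xV s \cup \xV t$ forces $B[x/x] = B$ and rules out any capture in congAppl. The discharge step is routine natural-deduction reasoning and could equally be phrased using $\limp$i and $\limp$e in place of $\lnot$i and $\lnot$e.
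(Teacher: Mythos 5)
Your proposal is correct and matches the paper's own proof in all essentials: both extract $\Gamma \vdash^{\mathsf{d}}_T s =_{\dpi x\colon A.B} t$ from the branch, derive $\Gamma, x\colon A \vdash^{\mathsf{d}}_T sx =_B tx$ via refl and congAppl, apply $\forall$i, and close with $\lnot$i/$\lnot$e, using \lemref{bracket} for the normalization bookkeeping. The only (immaterial) difference is that the paper pushes the bracket lemma through the negated hypothesis (via $\lnot[s] =_o [\lnot s]$ and cong$\vdash$) to work with the unnormalized $\lnot\forall x\colon A.\: sx =_B tx$, while you normalize the derived positive formula instead.
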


\begin{proof}
  From the assumptions,
  $\Gamma \vdash^{\mathsf{d}}_T \lnot[s] =_o [\lnot s]$, cong$\vdash$
  and \lemref{bracket}, we obtain
  $\Gamma \vdash^{\mathsf{d}}_T s =_{\dpi x\colon A.B} t$ and
  $\Gamma \vdash^{\mathsf{d}}_T \lnot \forall x\colon A. sx =_B tx$.
  Furthermore, we have
  $\Gamma, x\colon A \vdash^{\mathsf{d}}_T sx =_B tx$ by refl and
  congAppl.  Hence, $\forall$i yields
  $\Gamma \vdash^{\mathsf{d}}_T \forall x\colon A. sx =_B tx$ and we
  conclude by an application of $\lnot$e. \qed
\end{proof}

\begin{lemma}[$\decr$]
  \label{lem:soundness_dec}
  Let $(T,\Gamma,\Gamma')$ be a well-formed branch.
  Assume
  \[x\: s_1 \:\dots\: s_n \neq_{a u_1\dots u_m} x\: t_1 \:\dots\: t_n
    \in T,\Gamma\] and
  $\Gamma \vdash^{\mathsf{d}}_T x\colon \dpi y_1\colon A_1 \cdots \dpi
  y_n \colon A_n.a\: u_1' \:\dots\: u_m'$ where
  $u_i = u_i'[y_1/s_1 \dots y_n/s_n]$ for $1 \leq i \leq m$.  If
  $\Gamma,s_i \neq_{A_i[x_1/s_1,\dots,x_{i-1}/s_{i-1}]} t_i
  \vdash^{\mathsf{d}}_T \bot$ for all $1 \leq i \leq n$ then
  $\Gamma \vdash^{\mathsf{d}}_T \bot$.
\end{lemma}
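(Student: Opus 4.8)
The statement to prove ($\decr$ soundness):

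Given that $x\: s_1 \dots s_n \neq_{a u_1\dots u_m} x\: t_1 \dots t_n \in T,\Gamma$, and $x$ has the dependent function type $\dpi y_1\colon A_1 \cdots \dpi y_n\colon A_n.\, a\, u_1' \dots u_m'$ with the $u_i$ being the appropriate substitutions, and assuming each branch $\Gamma, s_i \neq t_i \vdash^{\mathsf{d}}_T \bot$ is refutable, conclude $\Gamma \vdash^{\mathsf{d}}_T \bot$.

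Let me think about this carefully. This is the "decomposition" rule — it's essentially the contrapositive of congruence for application. The idea: if $x\, s_1 \dots s_n \neq x\, t_1 \dots t_n$, then at least one argument pair must differ, i.e., some $s_i \neq t_i$. Because if all $s_i = t_i$, then by congruence $x\, s_1 \dots s_n = x\, t_1 \dots t_n$, contradicting the disequality.

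So the structure is a proof by cases / contradiction argument inside DHOL's natural deduction.

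**Proof strategy:**

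The plan is to reason by contradiction against the assumption, using congruence of application (`congAppl`). The core intuition is the contrapositive: the disequality $x\, s_1 \dots s_n \neq x\, t_1 \dots t_n$ can only hold if some argument pair $s_i, t_i$ differs. So if we could establish $s_i = t_i$ for all $i$, we would derive $x\, s_1\dots s_n = x\, t_1\dots t_n$ by repeated congruence, contradicting the disequality premise.

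First I would observe that the $n$ hypotheses $\Gamma, s_i \neq t_i \vdash^{\mathsf{d}}_T \bot$ give us, via $\lnot$i, the derivations $\Gamma \vdash^{\mathsf{d}}_T s_i =_{A_i[\cdots]} t_i$ for each $1 \le i \le n$ (after checking that $s_i \neq t_i$ is well-typed, which follows from well-formedness of the branch and the typing of $x$). Specifically, $\Gamma, s_i \neq t_i \vdash^{\mathsf{d}}_T \bot$ together with $\lnot$i yields $\Gamma \vdash^{\mathsf{d}}_T \lnot\lnot(s_i = t_i)$, and then $\lnot\lnot$e gives $\Gamma \vdash^{\mathsf{d}}_T s_i =_{A_i[\cdots]} t_i$.

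Next I would chain these equalities through `congAppl`. Starting from $\Gamma \vdash^{\mathsf{d}}_T x =_{\dpi y_1\colon A_1\cdots\dpi y_n\colon A_n.\,a\,u_1'\dots u_m'} x$ (obtained by `refl` from the typing hypothesis on $x$), I apply `congAppl` with the first equality $s_1 =_{A_1} t_1$ to get $\Gamma \vdash^{\mathsf{d}}_T x\, s_1 =_{(\dpi y_2\colon A_2\cdots.a\cdots)[y_1/s_1]} x\, t_1$. Iterating this $n$ times, feeding in $s_i = t_i$ at step $i$ (where the type index of the $i$-th equality is exactly $A_i[y_1/s_1,\dots,y_{i-1}/s_{i-1}]$, matching the hypothesis), yields $\Gamma \vdash^{\mathsf{d}}_T x\, s_1\dots s_n =_{a\, u_1\dots u_m} x\, t_1\dots t_n$, where the final type index simplifies to $a\, u_1\dots u_m$ precisely because $u_i = u_i'[y_1/s_1,\dots,y_n/s_n]$ as stated in the hypothesis.

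Finally, from the branch premise $x\, s_1\dots s_n \neq_{a u_1\dots u_m} x\, t_1\dots t_n \in T,\Gamma$ I extract (via `axiom`/`assume` and \lemref{bracket} if needed) the derivation $\Gamma \vdash^{\mathsf{d}}_T \lnot(x\, s_1\dots s_n = x\, t_1\dots t_n)$. Applying $\lnot$e to this and the congruence-derived equality gives $\Gamma \vdash^{\mathsf{d}}_T \bot$, completing the proof.

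**The main obstacle** I anticipate is bookkeeping the dependent type indices through the iterated `congAppl` applications: one must verify that the type annotation on the $i$-th equality hypothesis, namely $A_i[y_1/s_1,\dots,y_{i-1}/s_{i-1}]$, is exactly the domain type that `congAppl` expects at stage $i$, and that the accumulated substitutions in the codomain align so that the final index is the literal $a\, u_1\dots u_m$ appearing in the disequality. This requires carefully tracking how each application of `congAppl` instantiates the bound variable $y_i$ by $s_i$ (via the type index $B[x/t]$ in that rule's conclusion), and confirming that this is consistent with using $s_i$ (rather than $t_i$) throughout — which is legitimate since at each stage we have already established $s_i = t_i$, so by `sym` and the admissible transitivity rule the choice does not affect provability, though keeping the indices syntactically matched to the rule's schema is the delicate part.
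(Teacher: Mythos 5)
Your proposal is correct and follows essentially the same route as the paper's proof: extract $\Gamma \vdash^{\mathsf{d}}_T s_i =_{A_i[x_1/s_1,\dots,x_{i-1}/s_{i-1}]} t_i$ from the refuted alternatives, obtain $\Gamma \vdash^{\mathsf{d}}_T x = x$ at the dependent function type by refl, chain $n$ applications of congAppl to derive $x\: s_1 \dots s_n =_{a\, u_1 \dots u_m} x\: t_1 \dots t_n$, and contradict the disequality via $\lnot$e. You in fact spell out details the paper leaves implicit (the $\lnot$i/$\lnot\lnot$e step and the substitution bookkeeping for the type indices), so nothing is missing.
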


\begin{proof}
  From the assumptions, we obtain
  $\Gamma \vdash^{\mathsf{d}}_T s_i
  =_{A_i[x_1/s_1,\dots,x_{i-1}/s_{i-1}]} t_i$ for all
  $1 \leq i \leq n$ and
  $\Gamma \vdash^{\mathsf{d}}_T x =_{\dpi y_1\colon A_1 \cdots \dpi
    y_n \colon A_n.a u_1' \dots u_m'} x$.  Hence, $n$ applications of
  the congruence rule for application yield
  $\Gamma \vdash^{\mathsf{d}}_T x\: s_1 \:\dots\: s_n =_{a u_1 \dots
    u_m} x\: t_1 \:\dots\: t_n$. Since we also have
  $\Gamma \vdash^{\mathsf{d}}_T x\: s_1 \:\dots\: s_n \neq_{a u_1\dots
    u_m} x\: t_1 \:\dots\: t_n$, we obtain
  $\Gamma \vdash^{\mathsf{d}}_T \bot$. \qed
\end{proof}

Now we are ready to prove the soundness result for $\xT$.

\begin{theorem}
  \label{thm:tableau_soundness}
  If $(T,\Gamma,\cdot)$ is well-formed and d-refutable then
  $\Gamma \vdash^{\mathsf{d}}_T \bot$.
\end{theorem}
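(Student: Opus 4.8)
The plan is to prove the statement by induction on the definition of d-refutability, strengthening the claim slightly so that it applies to every well-formed d-refutable branch $(T,\Gamma,\Gamma')$ with an \emph{arbitrary} HOL context $\Gamma'$; the theorem is then the special case $\Gamma' = \cdot$. This strengthening is convenient because the inductive step of d-refutability permits the HOL versions of the non-erasure rules to fire on formulas of $\Phi(\overline{T})$, which enlarge $\Gamma'$. Since the conclusion $\Gamma \vdash^{\mathsf{d}}_T \bot$ mentions only $\Gamma$, such steps become trivial once the induction hypothesis ranges over all $\Gamma'$.

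For the base cases: if $(T,\Gamma,\cdot)$ is e-refutable, then \lemref{e_refut} gives $\Phi(\overline{\Gamma}) \vdash^{\mathsf{s}}_{\Phi(\overline{T})} \bot$. Well-formedness yields $\Gamma \vdash^{\mathsf{d}}_T \bot\colon o$ (via the $\bot$type rule) and $\Phi(\overline{\bot}) = \bot$, so \corref{phi_correct} instantiated at $s = \bot$ immediately delivers $\Gamma \vdash^{\mathsf{d}}_T \bot$. If instead $\bot \in T,\Gamma$, then the axiom or assume rule applied to the well-formed context $\Gamma$ yields $\Gamma \vdash^{\mathsf{d}}_T \bot$ directly.

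For the inductive step, consider a step $\langle (T,\Gamma,\Gamma'), A_1,\dots,A_n\rangle$ with a rule $R \in \xT \setminus (\erar \cup \erbr)$ all of whose alternatives are d-refutable. If $R$ is a HOL rule, every $A_i$ has the form $(T,\Gamma,\Gamma'_i)$ with the DHOL context unchanged, so the induction hypothesis applied to a single alternative already gives $\Gamma \vdash^{\mathsf{d}}_T \bot$. If $R$ is a DHOL rule, then $A_i = (T,\Gamma_i,\Gamma')$ with $\Gamma \subseteq \Gamma_i$, the induction hypothesis yields $\Gamma_i \vdash^{\mathsf{d}}_T \bot$ for each $i$, and I invoke the soundness of the particular DHOL rule to descend to $\Gamma \vdash^{\mathsf{d}}_T \bot$. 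Well-formedness of each alternative follows from that of the head, exactly as in the proof of \lemref{e_refut}, so the hypotheses of these soundness arguments are always met.

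The main obstacle is therefore establishing a soundness lemma for each DHOL rule, in the style of \lemref{soundness_fe}, \lemref{soundness_fq} and \lemref{soundness_dec} already given for $\fer$, $\fqr$ and $\decr$. For $\matr$ and $\conr$ one reasons with congruence, symmetry and the admissible transitivity rule on the dependent equalities, while $\negr$, $\neqr$, $\dnegr$, $\impr$, $\nimpr$, $\ber$, $\bqr$, $\allr$ and $\nallr$ reduce to routine derivations in the natural deduction calculus of \figref{dholndcalculus}, using \lemref{bracket} to discharge the $\beta\eta$-normalization brackets $[\cdot]$. None of these is individually hard, but each \emph{branching} rule ($\impr$, $\ber$, $\bqr$, $\matr$, $\conr$) requires combining a refutation of every alternative into one DHOL refutation, i.e.\ performing the case analysis inside DHOL; for the equality rules this is where the classical reasoning and the two-sided equality premises of $\conr$ must be handled with care.
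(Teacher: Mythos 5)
There is a genuine gap, and it sits exactly where your proposal waves it away: the claim that ``well-formedness of each alternative follows from that of the head, exactly as in the proof of \lemref{e_refut}'' is false for the DHOL versions of the branching rules, and the paper's own proof of this theorem spends essentially all of its effort on precisely that point. Take $\decr$: the $i$-th alternative extends $\Gamma$ with $s_i \neq_{A_i[x_1/s_1,\dots,x_{i-1}/s_{i-1}]} t_i$, so its well-formedness requires $\Gamma \vdash^{\mathsf{d}}_T t_i\colon A_i[x_1/s_1,\dots,x_{i-1}/s_{i-1}]$, whereas the head only yields $\Gamma \vdash^{\mathsf{d}}_T t_i\colon A_i[x_1/t_1,\dots,x_{i-1}/t_{i-1}]$. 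These two types are provably equal only after one has established $\Gamma \vdash^{\mathsf{d}}_T s_j = t_j$ for all $j<i$, and those equalities are themselves obtained by applying the induction hypothesis to the alternatives \emph{to the left} --- which is legitimate only once their well-formedness has already been shown. So the inductive step cannot treat the alternatives symmetrically and independently, as your proposal does (``the induction hypothesis yields $\Gamma_i \vdash^{\mathsf{d}}_T \bot$ for each $i$''); it must sweep left to right, interleaving well-formedness proofs with applications of the IH. The same dependence arises for $\matr$ and, more subtly, for $\impr$: from $s \limp t \in T,\Gamma$ the typing rule ${\limp}\text{type}$ only gives $\Gamma, s \vdash^{\mathsf{d}}_T t\colon o$, so the alternative $\Gamma,t$ need not be well-formed at all until the refutation of the left alternative $\Gamma,\lnot s$ has delivered $\Gamma \vdash^{\mathsf{d}}_T s$ (compare the paper's observation that $m \neq n \lor x = y$ type checks while $x = y \lor m \neq n$ does not). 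The paper explicitly flags this as the distinguishing feature of DHOL tableaux --- branches are not independent --- and works out the $\decr$ case in detail; without this interleaved argument your invocations of the IH are unjustified, since the IH presupposes well-formedness of the branch it is applied to.

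Apart from this, your skeleton matches the paper's proof: the same induction on d-refutability, the identical treatment of the base cases via \lemref{e_refut} combined with \corref{phi_correct} at $s = \bot$ and via $\bot \in T,\Gamma$, and per-rule soundness lemmas in the style of \lemref{soundness_fe}--\lemref{soundness_dec} (the remaining ones are in the paper's appendix). Your strengthening to arbitrary $\Gamma'$ is a sensible touch: the paper's inductive case silently considers only steps with DHOL rules, whereas d-refutability formally also admits HOL steps on formulas of $\Phi(\overline{T})$, and your generalization disposes of these cleanly. But that refinement does not repair --- and your text actively contradicts --- the left-to-right well-formedness propagation that is the real content of the paper's argument.
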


\begin{proof}
  Let $(T,\Gamma,\cdot)$ be well-formed and d-refutable. We proceed by
  induction on the definition of d-refutability.  If
  $(T,\Gamma,\cdot)$ is e-refutable, the result follows from
  \lemref{e_refut} together with \corref{phi_correct}.  If
  $\bot \in T,\Gamma$ then clearly
  $\Gamma \vdash^{\mathsf{d}}_T \bot$.  For the inductive case,
  consider a step
  $\langle
  (T,\Gamma,\cdot),(T,\Gamma_1,\cdot),\dots,(T,\Gamma_n,\cdot)
  \rangle$ with some DHOL rule.  Since $(T,\Gamma,\cdot)$ is
  d-refutable, all alternatives must be d-refutable.  If we manage to
  show well-formedness of every alternative, we can apply the
  induction hypothesis to obtain $\Gamma_i \vdash^{\mathsf{d}}_T \bot$
  for all $1 \leq i \leq n$.  Then, we can conclude
  $\Gamma \vdash^{\mathsf{d}}_T \bot$ by soundness of the DHOL rules.
  Hence, it remains to prove well-formedness of the alternatives. In
  most cases, this is straightforward. We only show one interesting
  case, namely $\decr$.

  Instead of proving
  $\Gamma,s_i \neq_{A_i[x_1/s_1,\dots,x_{i-1}/s_{i-1}]} t_i
  \vdash^{\mathsf{d}}_T \bot$ for all $1 \leq i \leq n$ we show that
  $\Gamma \vdash^{\mathsf{d}}_T s_i
  =_{A_i[x_1/s_1,\dots,x_{i-1}/s_{i-1}]} t_i$ for all
  $1 \leq i \leq n$.  Since $(T,\Gamma,\cdot)$ is a well-formed
  branch, both $s_1$ and $t_1$ have type $A_1$. Hence,
  $(T,(\Gamma,s_1 \neq_{A_1} t_1),\cdot)$ is well-formed and our
  original induction hypothesis yields
  $\Gamma, s_1 \neq_{A_1} t_1 \vdash^{\mathsf{d}}_T \bot$ from which
  we obtain $\Gamma \vdash^{\mathsf{d}}_T s_1 =_{A_1} t_1$.  Now let
  $i \leq n$ and assume we have
  $\Gamma \vdash^{\mathsf{d}}_T s_j
  =_{A_j[x_1/s_1,\dots,x_{j-1}/s_{j-1}]} t_j$ for all $j < i$
  ($\ast$).  This is only possible if
  $\Gamma \vdash^{\mathsf{d}}_T t_j\colon
  A_j[x_1/s_1,\dots,x_{j-1}/s_{j-1}]$ for all $j < i$. Since
  $(T,\Gamma,\cdot)$ is a well-formed branch, it is clear that
  $\Gamma \vdash^{\mathsf{d}}_T s_{i}\colon
  A_{i}[x_1/s_1,\dots,x_{i-1}/s_{i-1}]$ and
  $\Gamma \vdash^{\mathsf{d}}_T t_{i}\colon
  A_{i}[x_1/t_1,\dots,x_{i-1}/t_{i-1}]$.  From ($\ast$), we obtain
  \[\Gamma \vdash^{\mathsf{d}}_T t_{i}\colon A_{i}[x_1/s_1,\dots,x_{i-1}/s_{i-1}],\]
  so
  $(T,(\Gamma,s_i \neq_{A_i[x_1/s_1,\dots,x_{i-1}/s_{i-1}]}
  t_i),\cdot)$ is well-formed.  Hence, the original induction
  hypothesis yields
  $\Gamma \vdash^{\mathsf{d}}_T s_i =_{A_i[x_1/s_1,\dots,x_i/s_{i-1}]}
  t_i$ as desired. \qed
\end{proof}

In the previous proof, we can see that for $\decr$, well-formedness of
an alternative depends on refutability of all branches to the
left. Note that the same holds for $\matr$ and $\impr$. This is a
distinguishing feature of DHOL as in tableaux, branches are usually
considered to be independent.

Finally, completeness is immediate from the completeness of the HOL
tableau calculus and the erasure:

\begin{theorem}
  \label{thm:tableau_completeness} If
  $\Gamma \vdash^{\mathsf{d}}_T \bot$ then $(T,\Gamma,\cdot)$ is
  d-refutable.
\end{theorem}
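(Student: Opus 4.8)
The plan is to reduce everything to the completeness of the underlying HOL tableau calculus via the erasure, exactly as the sentence preceding the theorem suggests. Assume $\Gamma \vdash^{\mathsf{d}}_T \bot$. Since $\Gamma \vdash^{\mathsf{d}}_T \bot\colon o$ holds trivially and $\Phi(\overline{\bot}) = \bot$, \corref{phi_correct} immediately yields $\Phi(\overline{\Gamma}) \vdash^{\mathsf{s}}_{\Phi(\overline{T})} \bot$. The remaining task is to turn this HOL derivation into a d-refutation of $(T,\Gamma,\cdot)$, and by the base clause of d-refutability it suffices to prove that $(T,\Gamma,\cdot)$ is e-refutable.

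First I would establish that the fully erased branch $(T,\Gamma,\Phi(\overline{\Gamma}))$ is e-refutable. It is well-formed: $\vdash^{\mathsf{d}} T\:\thy$ and $\vdash^{\mathsf{d}}_T \Gamma\:\ctx$ are presupposed by $\Gamma \vdash^{\mathsf{d}}_T \bot$, while $\vdash^{\mathsf{s}}_{\Phi(\overline{T})} \Phi(\overline{\Gamma})\:\ctx$ follows from \thmref{erasure_completeness} together with the fact that $\Phi$ preserves well-formedness of contexts. Applying \lemref{s_refut} to $\Phi(\overline{\Gamma}) \vdash^{\mathsf{s}}_{\Phi(\overline{T})} \bot$ makes this branch s-refutable, and since trivially $\Phi(\overline{\Gamma}) \subseteq \Phi(\overline{\Gamma})$, it is e-refutable by the base clause of e-refutability.

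Next I would show that $(T,\Gamma,\cdot)$ reaches $(T,\Gamma,\Phi(\overline{\Gamma}))$ through a finite chain of $\erar$- and $\erbr$-steps, processing the entries of $\Gamma$ from left to right: each declaration $x\colon A$ is handled by $\erbr$, contributing $x\colon\overline{A}, A^*\:x\:x$, and each assumption $s$ is handled by $\erar$, contributing $[\Phi(\overline{s})]$. Walking this chain backwards and repeatedly invoking the inductive clause of e-refutability (if $\langle A,A_1\rangle \in \erar \cup \erbr$ and $A_1$ is e-refutable, then $A$ is e-refutable) propagates e-refutability from the fully erased branch back to $(T,\Gamma,\cdot)$. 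That branch is then d-refutable by the base clause, which is the claim.

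The step I expect to be the main obstacle is the construction of this erasure chain, where the bookkeeping is genuinely delicate. The side condition of $\erar$ requires $\overline{s}\colon o$ to already be derivable in the current HOL context, so the $\erbr$-steps for the declarations preceding $s$ in $\Gamma$ must be performed first; processing $\Gamma$ in order and invoking \thmref{erasure_completeness} on the well-formedness prefix $\Gamma_{<s} \vdash^{\mathsf{d}}_T s\colon o$ (noting that $\Phi$ leaves the relevant variable declarations untouched, so it does not disturb typing) secures this. A second subtlety is that $\erar$ inserts the $\beta\eta$-normal form $[\Phi(\overline{s})]$ rather than $\Phi(\overline{s})$ verbatim, so the context actually produced is only $\beta\eta$-equal to $\Phi(\overline{\Gamma})$. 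Since $\vdash^{\mathsf{s}}$ is closed under $\beta\eta$-conversion, this does not affect s-refutability, but it must be acknowledged so that the inclusion demanded by e-refutability is read up to $\beta\eta$-equivalence.
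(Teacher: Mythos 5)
Your proof is correct and takes essentially the same route as the paper's: apply \corref{phi_correct} and \lemref{s_refut} to obtain s-refutability (hence e-refutability) of $(T,\Gamma,\Phi(\overline{\Gamma}))$, then propagate e-refutability back to $(T,\Gamma,\cdot)$ through a chain of $\erar$/$\erbr$ steps. The paper compresses that chain into the phrase ``by inspecting $\erar$ and $\erbr$''; your explicit left-to-right processing of $\Gamma$, the ordering needed for the side condition $\overline{s}\colon o$, and the $\beta\eta$-normalization caveat are precisely the details that inspection is meant to cover.
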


\begin{proof}
  Let $\Gamma \vdash^{\mathsf{d}}_T \bot$.  Using \corref{phi_correct}
  and \lemref{s_refut} we conclude s-refutability of
  $(T,\Gamma,\Phi(\overline{\Gamma}))$. By definition,
  $(T,\Gamma,\Phi(\overline{\Gamma}))$ is also e-refutable.
  Furthermore, by inspecting $\erar$ and $\erbr$ we conclude that
  $(T,\Gamma,\cdot)$ is also e-refutable and therefore
  d-refutable. \qed
\end{proof}

\section{Implementation}
\label{sec:implementation}

We implemented the tableau calculus for DHOL as an extension of Lash
\cite{BK22} which is a fork of Satallax, a successful automated
theorem prover for HOL \cite{B12}. By providing an efficient C
implementation of terms with perfect sharing as well as other
important data structures and operations, Lash outperforms Satallax
when it comes to the basic ground tableau calculus which both of them
implement. However, Lash removes a lot of the additional features
beyond the basic calculus that was implemented in
Satallax. Nevertheless, this was actually beneficial for our purpose
as we could concentrate on adapting the core part. Note that Lash and
Satallax do not just implement the underlying ground tableau calculus
but make heavy use of SAT-solving and a highly customizable priority
queue to guide the proof search \cite{B13,B12}.

For the extension of Lash to DHOL, the data structure for terms had to
be changed to support dependent function types as well as quantifiers
and lambda abstractions with dependent types.  Of course, it would be
possible to represent everything in the language of DHOL but the
formulation of DHOL suggests that the prover should do as much as
possible in the HOL fragment and only use ``proper'' DHOL when it is
really necessary. With this in mind, the parser first always tries to
produce simply-typed terms and only resorts to dependent types when it
is unavoidable. Therefore, the input problem often looks like a
mixture of HOL and DHOL even though everything is included in DHOL. A
nice side effect of this design decision is that our extension of Lash
works exactly like the original version on the HOL fragment except for
the fact that it is expected to be slower due to the numerous case
distinctions between simple types and dependent types which are needed
in this setting.

Although DHOL is not officially part of TPTP THF, it can be expressed
due to the existence of the \texttt{!>}-symbol which is used for
polymorphism. Hence, a type $\dpi x\colon A. B$ is represented as
\texttt{!>[X:A]:B}.  For simplicity and efficiency reasons, we did not
implement dependent types by distinguishing base types from their term
arguments but represent the whole dependent type as a term. When
parsing a base type $a$, Lash automatically creates an eponymous
constant of type $\tp$ to be used in dependent types as well as a
simple base type $a_0$ for the erasure and a constant $a^*$ for its
PER. The flags \texttt{DHOL\_RULES\_ONLY} and
\texttt{DHOL\_ERASURE\_ONLY} control the availability of the erasure
as well as the native DHOL rules, respectively. Note that the
implementation is not restricted to d-refutability but allows for
arbitrary refutations. In the standard flag setting, however, only the
native DHOL rules are used. Clearly, this constitutes a sound
strategy. It is incomplete since the confrontation rule only considers
equations with syntactically equivalent types. We have more to say
about this in \secref{rule_impl}.

\subsection{Type Checking}

By default, problems are only type-checked with respect to their
simply-typed skeleton. If the option \texttt{exactdholtypecheck} is
set, type constraints stemming from the term arguments of dependent
base types are generated and added to the conjecture. The option
\texttt{typecheck\-only} discards the original conjecture, so Lash
just tries to prove the type constraints.  Since performing the type
check involves proper theorem proving, we added the new SZS ontology
statuses \texttt{TypeCheck} and \texttt{InexactTypecheck} to the
standardized output of Lash. Here, the former one means that a problem
type checks while the latter one just states that it type checks with
respect to the simply-typed skeleton.

For the generation of type constraints, each formula of the problem is
traversed like in normal type checking.  In addition, every time a
type condition $a\: t_1 \:\dots\: t_n \equiv a\: s_1 \:\dots\: s_n$
comes up and there is some $i$ such that $s_i$ and $t_i$ are not
syntactically equivalent, a constraint stating that $s_i = t_i$ is
provable is added to the set of type constraints.  Note that it does
not always suffice to just add $s_i = t_i$ as this equation may
contain bound variables or only hold in the context in which the
constraint appears.  To that end, we keep relevant information about
the term context when generating these constraints. Whenever a forall
quantifier or lambda abstraction comes up, it is translated to a
corresponding forall quantifier in the context since we want the
constraint to hold in any case. While details like applications can be
ignored, it is important to keep left-hand sides of implications in
the context as it may be crucial for the constraint to be met. In
general, any axiom may contribute to the typechecking proof.

\begin{example}
  The conjecture
  \[\forall n\colon\nat,x\colon\lst\:n.\: n =_{\nat} 0 \limp \app\: n\: n\: x\: x = x\]
  is well-typed if the type constraint
  \[\forall n\colon\nat,x\colon\lst\: n.\: n =_{\nat} 0 \limp \plus\:n\: n =_{\nat} n\]
  is provable. Lash can generate this constraint and finds a proof
  quickly using the axiom
  $\forall n\colon\nat.\:\plus\: \zro\: n =_{\nat} n$.
\end{example}

Since conjunctions and disjunctions are internally translated to
implications, it is important to note that we process formulas from
left to right, i.e.~for $x\colon \lst\: n$ and $y\colon \lst\: m$, the
proposition $m \neq n \lor x = y$ type checks because we can assume
$m = n$ to process $x = y$.  Consequently, $x = y \lor m \neq n$ does
not type check. As formulas are usually read from left to right, this
is a natural adaption of short-circuit evaluation in programming
languages. Furthermore, it is in accordance with the presentation of
Rothgang et al.~\cite{RRB23} as well as the corresponding
implementation in PVS \cite{ROS98}. As a matter of fact, PVS handles
its undecidable type checking problem in essentially the same way as
our new version of Lash by generating so called \emph{type correctness
  conditions} (TCCs).

\subsection{Implementation of the Rules}
\label{sec:rule_impl}

Given the appropriate infrastructure for dependent types, the
implementation of most rules in \figref{trules} is a straightforward
extension of the original HOL implementation.  For $\allr$, the side
condition $\Gamma \vdash^{\mathsf{d}}_T t\colon A$ is undecidable in
general.  It has been chosen to provide a simple characterization
of the tableau calculus. Furthermore, it emphasizes that we do not
instantiate with terms whose type does not literally match with the
type of the quantified variable. In the implementation, we keep a pool
of possible instantiations for types $A$ which occur in the
problem. The pool gets populated by terms of which we know that they
have a given type because this information was available during
parsing or proof search. Hence, we only instantiate with terms
$t$ for which we already know that
$\Gamma \vdash^{\mathsf{d}}_T t\colon A$ holds.

Given an equation $s =_A t$, there are many candidate representations
of $A$ modulo type equality.  When we build an equation in the
implementation, we usually use the type of the left-hand side. Since
all native DHOL rules of the tableau calculus enforce syntactically
equivalent types, the ambiguity with respect to the type of an
equation leads to problems. For example, consider a situation where
$\Gamma \vdash^{\mathsf{d}}_T s\colon A$,
$\Gamma \vdash^{\mathsf{d}}_T t\colon B$ and
$\Gamma \vdash^{\mathsf{d}}_T s =_A t$ which implies
$\Gamma \vdash^{\mathsf{d}}_T A \equiv B$. During proof search, it
could be that $\Gamma \vdash^{\mathsf{d}}_T t \neq s$ is
established. Clearly, this is a contradiction which leads to a
refutation, but usually the inequality annotated with
the type $B$ which makes the refutation inaccessible for our native
DHOL rules. Therefore, we implemented rules along the lines of
\[\symcastar\   \frac{s =_{A} t}{t =_{B} s}\ t\colon B \qquad \symcastbr\   \frac{s \neq_{A} t}{t \neq_{B} s}\ t\colon B\]
which do not only apply symmetry but also change the type of the
equality in a sound way.  Like in $\allr$, the side condition should
be read as $\Gamma \vdash^{\mathsf{d}}_T t\colon B$ which makes it
undecidable. However, in practice, we can compute a representative of
the type of $t$ given the available type information. While
experimenting with the new DHOL version of Lash, the implementation of
these rules proved to be very beneficial for refutations which only
work with the DHOL rules. For the future, it is important to note
that $\symcastar$ and $\symcastbr$ are not sound for the extension of
DHOL to predicate subtypes as $\Gamma \vdash^{\mathsf{d}}_T s =_{A} t$
and $\Gamma \vdash^{\mathsf{d}}_T t\colon B$ do not imply
$\Gamma \vdash^{\mathsf{d}}_T A \equiv B$ anymore.

\subsection{Generating Instantiations}

Since Lash implements a ground tableau calculus, it does not support
higher-order unification.  Therefore, the generation of suitable
instantiations is a major issue. In the case of DHOL, it is actually
beneficial that Lash already implements other means of generating
instantiations since the availability of unification for DHOL is
questionable: There exist unification procedures for dependent
type theories (see for example \cite{E89}) but for DHOL such a
procedure would also have to address the undecidable type equality
problem.

For simple base types, it suffices to consider so-called
\emph{discriminating} terms to remain complete \cite{BB11}. A term $s$
of simple base type $a$ is discriminating in a branch $A$ if
$s \neq_{a} t \in A$ or $t \neq_{a} s \in A$ for some term $t$. For
function terms, completeness is guaranteed by enumerating all possible
terms of a given type. Of course, this is highly impractical, and
there is the important flag
\texttt{INITIAL\_SUBTERMS\_AS\_INSTANTIATIONS} which adds all subterms
of the initial problem as instantiations. This heuristic works very
well in many cases.

For dependent types, we do not check for type equality when
instantiating quantifiers but only use instantiations with the exact
same type (c.f.~$\allr$ in \figref{trules}) and let the erasure handle
the remaining cases.

An interesting feature of this new version of Lash is the possibility
to automatically generate instantiations for induction axioms. Given
the constraints of the original implementation, the easiest way to
sneak a term into the pool of instantiations is to include it into an
easily provable lemma and then use the flag
\texttt{INITIAL\_SUBTERMS\_AS\_INSTANTIATIONS}. However, this adds
unnecessary proof obligations, so we modified the implementation such
that initial subterms as instantiations also include
lambda-abstractions corresponding to forall quantifiers.

\begin{example}
  \label{exa:list_induct}
  Consider the induction axiom for lists:
  \begin{align*}
    &\forall p\colon (\dpi n\colon\nat.\:\lst\:n \to o).\: p\:\zro\:\nil \\
    &\phantom{\forall p}\limp{} (\forall n\colon\nat,x\colon\elem,y\colon\lst\: n.\: p\: n\: y \limp p\:(\suc\: n)\: (\cons\: n\: x\: y)) \\
    &\phantom{\forall p}\limp{} (\forall n\colon\nat, x\colon\lst\: n.\: p\: n\: x)
  \end{align*}
  Even though it works for arbitrary predicates $p$, it is very hard
  for an ATP system to guess the correct instance for a given problem
  without unification in general. However, given the conjecture
  $\forall n\colon\nat,x\colon\lst\: n.\: \app\: n\: \zro\: x\: \nil =_{\lst\: n} x$
  we can easily read off the correct instantiation for $p$ where
  $\forall$ is replaced by $\lambda$.
\end{example}

\section{Case Study: List Reversal Is an Involution}
\label{sec:experiments}

Consider the following equational definition of the list reversal
function $\rev$:
\begin{align*}
  & \rev\:\zro\:\nil =_{\lst\:\zro} \nil \\
  & \forall n\colon\nat, x\colon\elem, y\colon \lst\: n. \\
  & \hspace{.6cm}\rev\: (\suc\: n)\: (\cons\: n\: x\: y) =_{\lst\: (\suc n)} \app\: n\: (\suc\: \zro)\: (\rev\: n\: y)\: (\cons\:\zro\: x\:\nil)
\end{align*}
The conjecture
\[\forall n\colon\nat, x\colon \lst\: n.\: \rev\: n\: (\rev\: n\: x) =_{\lst\: n} x \label{eq:rev_invol} \tag{\texttt{rev-invol}}\]
is very easy to state, but turns out to be hard to prove
automatically. The proof is based on the equational definitions of
$\plus$ and $\app$ given in \exaref{plus_app} as well as several
induction proofs on lists using the axiom from \exaref{list_induct}.
In particular, some intermediate goals are needed to succeed:
\begin{align*}
  & \forall n\colon\nat,x\colon \lst\: n.\: \app\: n\:\zro\: x\:\nil =_{\lst\: n} x \label{eq:app_nil} \tag{\texttt{app-nil}} \\[3mm]
  & \forall n_1\colon\nat, x_1\colon \lst\: n_1, n_2\colon\nat, x_2\colon \lst\: n_2, n_3\colon\nat, x_3\colon \lst\: n_3. \label{eq:app_assoc} \tag{\texttt{app-assoc}} \\
  & \hspace{.6cm}\phantom{{}={}} \app\: n_1\: (\plus\: n_2\: n_3)\: x_1\: (\app\: n_2\: n_3\: x_2\: x_3) \\
  & \hspace{.6cm}{}={} \app\: (\plus\: n_1\: n_2)\: n_3\: (\app\: n_1\: n_2\: x_1\: x_2)\: x_3  \\[3mm]
  & \forall n\colon\nat, x\colon \lst\: n, y\colon\elem, m\colon\nat, z\colon \lst\: m. & \label{eq:app_assoc_m1} \tag{\texttt{app-assoc-m1}} \\
  & \hspace{3mm}\app\: (\plus\: n\: (\suc\: \zro))\: m\: (\app\: n\: (\suc\: \zro)\: x\: (\cons\:\zro\: y\:\nil))\: z = \app\: n\: (\suc\: m)\: x\: (\cons\: m\: y\: z) \\[3mm]
  & \forall n\colon\nat, x\colon \lst\: n, m\colon\nat, y\colon \lst\: m. \label{eq:rev_invol_lem} \tag{\texttt{rev-invol-lem}} \\
  & \hspace{.6cm}\rev\: (\plus\: n\: m)\: (\app\: n\: m\: (\rev\: n\: x)\: y) = \app\: m\: n\: (\rev\: m\: y)\: x
\end{align*}
Note that for
polymorphic lists, this is a standard example of an induction proof
with lemmas (see e.g.~\cite[Section 2.2]{NPW02}). In the
dependently-typed case, however, many intermediate equations would be
ill-typed in interactive theorem provers like Coq or Lean. In order to
succeed in automatically proving these problems, we had to break them
down into separate problems for the instantiation of the induction
axiom, the base case and the step case of the induction proofs. Often,
we further needed to organize these subproblems in manageable
steps. Overall, we created 34 TPTP problem files which are distributed
over the intermediate goals as shown in \tabref{prob_dist}.
\begin{table}[t]
  \centering
  \begin{tabular}{cc}
    \toprule
    Goal & Number of Problem Files \\
    \midrule
    \ref{eq:app_nil} & 4 \\
    \ref{eq:app_assoc} & 8 \\
    \ref{eq:app_assoc_m1} & 5 \\
    \ref{eq:rev_invol_lem} & 12 \\
    \ref{eq:rev_invol} & 5 \\
    \bottomrule
  \end{tabular}
  \label{tab:prob_dist}
  \caption{Amount of problem files per (intermediate) goal}
\end{table}
Note that already type checking these intermediate problems is not
trivial: All type constraints are arithmetic equations, and given the
Peano axioms, many of them need to be proven by induction
themselves. Since we are mainly interested in the dependently-typed
part, we added the needed arithmetical facts as axioms. Overall, the
problem files have up to 18 axioms including the Peano axioms,
selected arithmetical results, the defining equations of $\plus$,
$\app$ and $\rev$ as well as the list induction axiom. We left out
unnecessary axioms in many problem files to make the proof search
feasible.

With our new modes for DHOL which solely work with the native DHOL
rules, Lash can type check and prove all problems easily. If we turn
off the native DHOL rules and only work with the erasure using the
otherwise same modes with a 60s timeout, Lash can still typecheck all
problems but it only manages to prove 7 out of 34 problems. In order
to further evaluate the effectiveness of our new implementation, we
translated all problems from DHOL to HOL using the \emph{Logic
  Embedding
  Tool}\footnote{\url{https://github.com/leoprover/logic-embedding}},
which performs the erasure from~\cite{RRB23}. We then tested 16 other
HOL provers available on
\emph{SystemOnTPTP}\footnote{\url{https://tptp.org/cgi-bin/SystemOnTPTP}}
on the translated problems with a 60s timeout (without type
checking). We found that 5 of the 34 problems could only be solved by
the DHOL version of Lash, including one problem where it only needs 5
inference steps. Detailed results as well as means to reproduce them
are available on Lash's
website\footnote{\url{http://cl-informatik.uibk.ac.at/software/lash-dhol/}}
together with its source code.

\section{Conclusion}
\label{sec:conclusion}

Starting from the erasure from DHOL to HOL by Rothgang et
al.~\cite{RRB23}, we developed a sound and complete tableau calculus
for DHOL which we implemented in Lash. To the best of our knowledge,
this makes it the first standalone automated theorem prover for
DHOL. According to the experimental results, configurations where the
erasure is performed as a preprocessing step for a HOL theorem prover
can be outperformed by our new prover by solely using the native DHOL
rules. We hope that this development will raise further interest in
DHOL. Possible further work includes theoretical investigations such as the
incorporation of choice operators into the erasure as well as a
definition of the semantics of DHOL. Furthermore, it is desirable to
officially define the TPTP syntax for DHOL which then opens the
possibility of establishing a problem data set on which current and
future tools can be compared. Finally, we would like to extend Lash to
support predicate subtypes.  Rothgang et al.~already incorporated this
into the erasure but there is no corresponding syntactic support in
TPTP yet. In particular, this would get us much closer to powerful
automation support for systems like PVS.

\begin{credits}
  \subsubsection{\ackname}
  The results were supported by the Ministry of Education, Youth and
  Sports within the dedicated program ERC CZ under the project POSTMAN
  no.~LL1902.
  This work has also received funding from the European Union’s Horizon
  Europe research and innovation programme under grant agreement
  no.~101070254 CORESENSE as well as the ERC PoC grant no.~101156734
  \emph{FormalWeb3}. Views and opinions expressed are however those of
  the authors only and do not necessarily reflect those of the
  European Union or the Horizon Europe programme. Neither the European
  Union nor the granting authority can be held responsible for them.

  \subsubsection{\discintname}
  The authors have no competing interests to declare that are relevant
  to the content of this article.
\end{credits}

\bibliographystyle{splncs04}
\bibliography{references}

\appendix

\section{Soundness of DHOL Rules}
\label{appendixa}

In the following lemmas, we prove the remaining cases of soundness of
the DHOL tableau rules with respect to the natural deduction calculus
in \figref{dholndcalculus}. We begin with two auxiliary lemmas.

\begin{lemma}
  \label{lem:xmcases}
  Let $T$, $\Gamma$ and $s$ be such that
  $\vdash^{\mathsf{d}} T\:\thy$,
  $\vdash^{\mathsf{d}}_T \Gamma\:\ctx$
  and
  $\Gamma \vdash^{\mathsf{d}}_T s : o$.
  If $\Gamma, s\vdash^{\mathsf{d}}_T \bot$
  and $\Gamma, \neg s\vdash^{\mathsf{d}}_T \bot$,
  then $\Gamma \vdash^{\mathsf{d}}_T \bot$.
\end{lemma}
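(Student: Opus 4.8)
The plan is to recognize this statement as classical reasoning by cases, which in the natural deduction calculus of \figref{dholndcalculus} does not actually require the classical rule $\lnot\lnot\text{e}$: it reduces to two applications of negation introduction $\lnot\text{i}$ followed by a single negation elimination $\lnot\text{e}$. The underlying idea is that the two case hypotheses can be repackaged as proofs of $\lnot s$ and of $\lnot\lnot s$, which directly contradict each other.

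First I would use the hypothesis $\Gamma, s \vdash^{\mathsf{d}}_T \bot$. Since $\Gamma \vdash^{\mathsf{d}}_T s : o$ holds by assumption, the side condition of $\lnot\text{i}$ is satisfied, so $\lnot\text{i}$ yields $\Gamma \vdash^{\mathsf{d}}_T \lnot s$. Next I would treat the second hypothesis $\Gamma, \lnot s \vdash^{\mathsf{d}}_T \bot$ in the same fashion. Here the side condition of $\lnot\text{i}$ asks for $\Gamma \vdash^{\mathsf{d}}_T \lnot s : o$, which I obtain from $\Gamma \vdash^{\mathsf{d}}_T s : o$ by a single application of $\lnot\text{type}$. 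Applying $\lnot\text{i}$ then gives $\Gamma \vdash^{\mathsf{d}}_T \lnot\lnot s$.

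Finally, with both $\Gamma \vdash^{\mathsf{d}}_T \lnot s$ and $\Gamma \vdash^{\mathsf{d}}_T \lnot\lnot s$ established, I would apply $\lnot\text{e}$, instantiating its schematic formula with $\lnot s$ (so that the two premises read $\Gamma \vdash^{\mathsf{d}}_T \lnot s$ and $\Gamma \vdash^{\mathsf{d}}_T \lnot(\lnot s)$), which delivers $\Gamma \vdash^{\mathsf{d}}_T \bot$ as desired. I do not anticipate any real obstacle: the argument is short and essentially purely structural. The only bookkeeping is discharging the well-formedness side conditions of $\lnot\text{i}$, which are immediate from the assumption $\Gamma \vdash^{\mathsf{d}}_T s : o$ together with $\lnot\text{type}$. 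It is worth noting that this excluded-middle-style lemma is provable without invoking $\lnot\lnot\text{e}$, so it would remain valid even in an intuitionistic fragment of the calculus.
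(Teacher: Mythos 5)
Your proof is correct and follows essentially the same route as the paper's: two applications of $\lnot$i to the case hypotheses followed by a single contradiction via $\lnot$e. The only cosmetic difference is that the paper derives $\Gamma \vdash^{\mathsf{d}}_T s$ from the second hypothesis (implicitly passing through $\lnot\lnot$e) and contradicts it with $\lnot s$, whereas you stop at $\Gamma \vdash^{\mathsf{d}}_T \lnot\lnot s$ and instantiate $\lnot$e at $\lnot s$ --- which, as you observe, keeps the argument intuitionistic.
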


\begin{proof}
  Since $\Gamma, s\vdash^{\mathsf{d}}_T \bot$,
  we have $\Gamma \vdash^{\mathsf{d}}_T \neg s$.
  Since $\Gamma, \neg s\vdash^{\mathsf{d}}_T \bot$,
  we have $\Gamma \vdash^{\mathsf{d}}_T s$.
  Hence, $\Gamma \vdash^{\mathsf{d}}_T \bot$. \qed
\end{proof}

\begin{lemma}
  \label{lem:eqtype}
  If $\Gamma \vdash^{\mathsf{d}}_T (s =_A t)\colon o$ then
  $\Gamma \vdash^{\mathsf{d}}_T s\colon A$ and
  $\Gamma \vdash^{\mathsf{d}}_T t\colon A$.
\end{lemma}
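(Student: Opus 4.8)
The plan is to prove this by a straightforward inversion on the typing derivation, exploiting the fact that the typing judgments of \figref{dholndcalculus} are syntax-directed on the shape of the subject term. The key observation is that $s =_A t$ is a distinct syntactic form in the grammar of DHOL terms, and that among all the rules of \figref{dholndcalculus} whose conclusion is a typing judgment (namely \textsf{var}, \textsf{const}, \textsf{lambda}, \textsf{appl}, \textsf{=type}, $\bot$\textsf{type}, $\lnot$\textsf{type}, ${\limp}$\textsf{type} and $\forall$\textsf{type}), exactly one can conclude a judgment about a term of that shape, namely \textsf{=type}.

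First I would record the relevant structural fact: for compound terms the typing rules are syntax-directed, since the conclusion of each such rule fixes the top-level term constructor, and the only rules that insert a type-equivalence step via $\Gamma \vdash^{\mathsf{d}}_T A' \equiv A$ are \textsf{var} and \textsf{const}, both of which apply solely to variables and constants. In particular there is no general subsumption or conversion rule that could retype a compound term such as $s =_A t$; the conversion rule cong${\vdash}$ concerns the provability judgment $\Gamma \vdash^{\mathsf{d}}_T s$ rather than the typing judgment $\Gamma \vdash^{\mathsf{d}}_T s\colon o$ and so does not interfere.

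Next I would take any derivation of $\Gamma \vdash^{\mathsf{d}}_T (s =_A t)\colon o$ and invert on its final step. By the observation above, that step must be an instance of \textsf{=type}, whose conclusion is $\Gamma \vdash^{\mathsf{d}}_T (s =_A t)\colon o$ with the subscript $A$ being exactly the common type appearing in its two premises. Reading off those premises immediately yields $\Gamma \vdash^{\mathsf{d}}_T s\colon A$ and $\Gamma \vdash^{\mathsf{d}}_T t\colon A$, which is precisely the claim.

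The only real obstacle is justifying that this inversion is legitimate, i.e.~that \textsf{=type} is genuinely the unique rule whose conclusion can match the judgment in question. This amounts to a careful scan of the full rule set to confirm that no rule retypes a term headed by $=_A$ and that no conversion is silently applied to compound subjects. Once that uniqueness is verified, the lemma follows in a single inversion step with no further calculation.
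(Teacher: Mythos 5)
Your proposal is correct and takes essentially the same approach as the paper: the paper's proof is exactly this inversion, observing that an inspection of the rules in \figref{dholndcalculus} shows the judgment $\Gamma \vdash^{\mathsf{d}}_T (s =_A t)\colon o$ can only have been inferred by \textsf{=type}, whose premises give the claim. Your version merely spells out the supporting syntax-directedness check (in particular that \textsf{var} and \textsf{const} are the only rules invoking type equivalence and apply only to atomic names) that the paper leaves implicit.
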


\begin{proof}
  Let $\Gamma \vdash^{\mathsf{d}}_T (s =_A t)\colon o$.
  An inspection of the proof system in \figref{dholndcalculus}
  shows that this can only be inferred by the rule $=$type
  which requires $\Gamma \vdash^{\mathsf{d}}_T s\colon A$ and
  $\Gamma \vdash^{\mathsf{d}}_T t\colon A$. \qed
\end{proof}

\begin{lemma}[$\negr$]
  \label{lem:soundness_neg}
  Let $(T,\Gamma,\Gamma')$ be a well-formed branch.
  If $s \in T,\Gamma$
  and $\lnot s \in T,\Gamma$
  then $\Gamma \vdash^{\mathsf{d}}_T \bot$.
\end{lemma}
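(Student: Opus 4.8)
The plan is to turn the two membership hypotheses into genuine DHOL derivations and then close with the elimination rule for negation. Since $(T,\Gamma,\Gamma')$ is a well-formed branch, the definition of well-formedness gives us $\vdash^{\mathsf{d}} T\:\thy$ and $\vdash^{\mathsf{d}}_T \Gamma\:\ctx$, which are exactly the side conditions demanded by the leaf rules that introduce an axiom or an assumption.

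First I would handle $s \in T,\Gamma$. Recall that $s \in T,\Gamma$ abbreviates that $s$ occurs either in the theory $T$ or in the context $\Gamma$, so I would split into these two cases. If $s \in T$, the rule axiom applied with the premise $\vdash^{\mathsf{d}}_T \Gamma\:\ctx$ yields $\Gamma \vdash^{\mathsf{d}}_T s$; if instead $s \in \Gamma$, the rule assume with the same premise yields $\Gamma \vdash^{\mathsf{d}}_T s$. Either way we obtain $\Gamma \vdash^{\mathsf{d}}_T s$. Running the identical case distinction on the hypothesis $\lnot s \in T,\Gamma$ gives $\Gamma \vdash^{\mathsf{d}}_T \lnot s$.

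Finally, I would combine the two derivations with the rule $\lnot$e, whose premises are precisely $\Gamma \vdash^{\mathsf{d}}_T s$ and $\Gamma \vdash^{\mathsf{d}}_T \lnot s$ and whose conclusion is $\Gamma \vdash^{\mathsf{d}}_T \bot$, which is the desired statement.

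I do not expect a real obstacle here: this is one of the routine cases of rule soundness. The only point deserving a moment's attention is that the notation $s \in T,\Gamma$ silently packages a case distinction between a theory axiom and a context assumption, discharged by two different leaf rules (axiom versus assume), each nonetheless licensed by the well-formedness of $\Gamma$ that the well-formed-branch hypothesis supplies.
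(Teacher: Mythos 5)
Your proof is correct and matches the paper's argument: the paper simply states that $\Gamma \vdash^{\mathsf{d}}_T s$ and $\Gamma \vdash^{\mathsf{d}}_T \lnot s$ hold ``clearly'' and then applies $\lnot$e, while you spell out the implicit case distinction between the axiom and assume rules, both licensed by well-formedness of the branch. Same approach, just with the routine details made explicit.
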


\begin{proof}
  Clearly, $\Gamma \vdash^{\mathsf{d}}_T s$ and
  $\Gamma \vdash^{\mathsf{d}}_T \lnot s$.  Using the $\lnot$e rule, we
  obtain $\Gamma \vdash^{\mathsf{d}}_T \bot$. \qed
\end{proof}

\begin{lemma}[$\neqr$]
  \label{lem:soundness_neq}
  Let $(T,\Gamma,\Gamma')$ be a well-formed branch.
  If $s \neq_{a t_1 \dots t_n} s \in T,\Gamma$
  then $\Gamma \vdash^{\mathsf{d}}_T \bot$.
\end{lemma}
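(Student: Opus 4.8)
The plan is to derive the contradiction directly from reflexivity of equality. Since $(T,\Gamma,\Gamma')$ is well-formed and $s \neq_{a t_1 \dots t_n} s \in T,\Gamma$, the relevant instance of the axiom or assume rule (depending on whether the formula sits in $T$ or in $\Gamma$) gives $\Gamma \vdash^{\mathsf{d}}_T \lnot(s =_{a t_1 \dots t_n} s)$. To close the branch with $\lnot$e, I only need to produce the matching positive statement $\Gamma \vdash^{\mathsf{d}}_T s =_{a t_1 \dots t_n} s$, which the refl rule supplies as soon as $s$ is shown to have type $a\: t_1 \dots t_n$.

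The typing of $s$ comes entirely from well-formedness of the branch. Because $\lnot(s =_{a t_1 \dots t_n} s)$ occurs in a well-formed theory or context, it must be of type $o$ in $\Gamma$; inverting $\lnot$type yields $\Gamma \vdash^{\mathsf{d}}_T (s =_{a t_1 \dots t_n} s)\colon o$, and then \lemref{eqtype} immediately gives $\Gamma \vdash^{\mathsf{d}}_T s\colon a\: t_1 \dots t_n$. Applying refl to this produces $\Gamma \vdash^{\mathsf{d}}_T s =_{a t_1 \dots t_n} s$, and an application of $\lnot$e against the assumed $\Gamma \vdash^{\mathsf{d}}_T \lnot(s =_{a t_1 \dots t_n} s)$ concludes $\Gamma \vdash^{\mathsf{d}}_T \bot$.

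This argument is essentially routine, so there is no serious obstacle; the only point requiring a little care is extracting the well-typedness $\Gamma \vdash^{\mathsf{d}}_T (s =_{a t_1 \dots t_n} s)\colon o$ when the offending formula lives in the theory $T$ rather than in $\Gamma$. In that case I would appeal to the fact that axioms of a well-formed theory are well-typed and weaken this judgment along the well-formed context $\Gamma$, after which \lemref{eqtype} applies verbatim.
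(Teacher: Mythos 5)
Your proof is correct and follows essentially the same route as the paper's: extract $\Gamma \vdash^{\mathsf{d}}_T (s =_{a t_1 \dots t_n} s)\colon o$ from well-formedness of the branch by inverting $\lnot$type, apply \lemref{eqtype} to get $\Gamma \vdash^{\mathsf{d}}_T s\colon a\: t_1 \dots t_n$, then close with refl and $\lnot$e. Your extra remark about weakening when the formula lives in $T$ is a fine elaboration of a detail the paper leaves implicit.
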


\begin{proof}
  From $s \neq_{a t_1 \dots t_n} s \in T,\Gamma$ we obtain
  $\Gamma \vdash^{\mathsf{d}}_T (s \neq_{a t_1 \dots t_n} s)\colon o$.
  Note that this can only be the case if
  $\Gamma \vdash^{\mathsf{d}}_T (s =_{a t_1 \dots t_n} s)\colon o$.
  Then, \lemref{eqtype} yields
  $\Gamma \vdash^{\mathsf{d}}_T s\colon a\: t_1 \:\dots\: t_n$.
  Hence, we can use the refl rule to get
  $\Gamma \vdash^{\mathsf{d}} s =_{a t_1 \dots t_n} s$ and therefore
  $\Gamma \vdash^{\mathsf{d}}_T \bot$ by $\lnot$e. \qed
\end{proof}

\begin{lemma}[$\dnegr$]
  \label{lem:soundness_dneg}
  Let $(T,\Gamma,\Gamma')$ be a well-formed branch.
  Assume $\lnot \lnot s \in T,\Gamma$.
  If $\Gamma,s \vdash^{\mathsf{d}}_T \bot$
  then $\Gamma \vdash^{\mathsf{d}}_T \bot$.
\end{lemma}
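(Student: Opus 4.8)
The plan is to mirror the structure of the preceding soundness lemmas, turning the branch membership $\lnot\lnot s \in T,\Gamma$ into derivability and then combining double negation elimination with the hypothesis $\Gamma, s \vdash^{\mathsf{d}}_T \bot$ through $\lnot$i and $\lnot$e.

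First I would extract the typing information that the introduction rule will require. Since the branch $(T,\Gamma,\Gamma')$ is well-formed and $\lnot\lnot s \in T,\Gamma$, the formation rules thyAxiom and ctxAssume guarantee $\Gamma \vdash^{\mathsf{d}}_T \lnot\lnot s\colon o$. Inspecting the only rule that can introduce a negation type, namely $\lnot$type, this typing derivation must pass through $\Gamma \vdash^{\mathsf{d}}_T \lnot s\colon o$ and hence through $\Gamma \vdash^{\mathsf{d}}_T s\colon o$ — an inversion argument entirely analogous to \lemref{eqtype}.

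Next I would obtain the two ingredients for a contradiction. From $\lnot\lnot s \in T,\Gamma$ together with the context formation judgment, the axiom (resp.~assume) rule gives $\Gamma \vdash^{\mathsf{d}}_T \lnot\lnot s$, and an application of $\lnot\lnot$e yields $\Gamma \vdash^{\mathsf{d}}_T s$. Separately, the hypothesis $\Gamma, s \vdash^{\mathsf{d}}_T \bot$ together with $\Gamma \vdash^{\mathsf{d}}_T s\colon o$ allows an application of $\lnot$i to conclude $\Gamma \vdash^{\mathsf{d}}_T \lnot s$. Finally, $\lnot$e applied to $\Gamma \vdash^{\mathsf{d}}_T s$ and $\Gamma \vdash^{\mathsf{d}}_T \lnot s$ delivers $\Gamma \vdash^{\mathsf{d}}_T \bot$, as desired.

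I expect no serious obstacle: the only delicate point is recovering $\Gamma \vdash^{\mathsf{d}}_T s\colon o$ from well-formedness of the branch, which is a routine inversion on $\lnot$type. As an aside, one could also avoid $\lnot$i altogether by noting that once $\Gamma \vdash^{\mathsf{d}}_T s$ holds, the judgments over $\Gamma$ and over $\Gamma, s$ coincide, so $\Gamma, s \vdash^{\mathsf{d}}_T \bot$ transfers directly; the route through $\lnot$i and $\lnot$e is however cleaner and stays strictly within the rules of \figref{dholndcalculus}.
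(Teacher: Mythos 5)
Your proof is correct and follows essentially the same route as the paper's: apply $\lnot$i to the hypothesis $\Gamma,s \vdash^{\mathsf{d}}_T \bot$ to get $\Gamma \vdash^{\mathsf{d}}_T \lnot s$, derive $\Gamma \vdash^{\mathsf{d}}_T s$ from $\lnot\lnot s \in T,\Gamma$ via $\lnot\lnot$e, and close with $\lnot$e. The only difference is that you spell out the inversion recovering $\Gamma \vdash^{\mathsf{d}}_T s\colon o$ for the $\lnot$i premise, which the paper leaves implicit in the well-formedness of the branch.
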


\begin{proof}
  From $\Gamma,s \vdash^{\mathsf{d}}_T \bot$ we conclude
  $\Gamma \vdash^{\mathsf{d}}_T \lnot s$.
  We have $\Gamma \vdash^{\mathsf{d}}_T \lnot\lnot s$ and
  therefore $\Gamma \vdash^{\mathsf{d}}_T s$ by $\lnot\lnot$e.
  An application of $\lnot$e concludes the proof. \qed
\end{proof}

\begin{lemma}[$\impr$]
  \label{lem:soundness_imp}
  Let $(T,\Gamma,\Gamma')$ be a well-formed branch.
  Assume $s \limp t \in T,\Gamma$.
  If $\Gamma,\lnot s \vdash^{\mathsf{d}}_T \bot$
  and $\Gamma,t \vdash^{\mathsf{d}}_T \bot$
  then $\Gamma \vdash^{\mathsf{d}}_T \bot$.
\end{lemma}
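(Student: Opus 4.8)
The plan is to work entirely inside the natural deduction calculus of \figref{dholndcalculus}, following the same pattern as the proof of \lemref{soundness_dneg}. First I would record the two judgments that well-formedness of the branch supplies. Since $s \limp t \in T,\Gamma$, either the \emph{axiom} or the \emph{assume} rule gives $\Gamma \vdash^{\mathsf{d}}_T s \limp t$. Moreover, well-formedness guarantees that $s \limp t$ is a well-typed formula, i.e.\ $\Gamma \vdash^{\mathsf{d}}_T (s \limp t)\colon o$; inverting the ${\limp}$type rule then yields $\Gamma \vdash^{\mathsf{d}}_T s\colon o$, whence $\Gamma \vdash^{\mathsf{d}}_T (\lnot s)\colon o$ by $\lnot$type.

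Next I would convert the first hypothesis into a proof of $s$. From $\Gamma,\lnot s \vdash^{\mathsf{d}}_T \bot$ together with $\Gamma \vdash^{\mathsf{d}}_T (\lnot s)\colon o$, the $\lnot$i rule gives $\Gamma \vdash^{\mathsf{d}}_T \lnot\lnot s$, and $\lnot\lnot$e then yields $\Gamma \vdash^{\mathsf{d}}_T s$. Combining this with $\Gamma \vdash^{\mathsf{d}}_T s \limp t$ via ${\limp}$e produces $\Gamma \vdash^{\mathsf{d}}_T t$; note that, by our convention on the provability judgment, this also carries the typing information $\Gamma \vdash^{\mathsf{d}}_T t\colon o$.

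Finally, I would treat the second hypothesis symmetrically: from $\Gamma,t \vdash^{\mathsf{d}}_T \bot$ and $\Gamma \vdash^{\mathsf{d}}_T t\colon o$, the $\lnot$i rule gives $\Gamma \vdash^{\mathsf{d}}_T \lnot t$, and an application of $\lnot$e against $\Gamma \vdash^{\mathsf{d}}_T t$ concludes $\Gamma \vdash^{\mathsf{d}}_T \bot$. This chain uses no structural rules such as weakening, so it stays within the primitives of \figref{dholndcalculus}.

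The only mildly delicate point, and hence the main thing to get right, is discharging the well-typedness side conditions demanded by $\lnot$i, which are not among the stated hypotheses: the conditions $s\colon o$ and $(\lnot s)\colon o$ come from well-formedness of the formula $s \limp t$ by inversion on ${\limp}$type, while $t\colon o$ is recovered for free from the intermediate conclusion $\Gamma \vdash^{\mathsf{d}}_T t$. An alternative route would be to derive $\Gamma,s \vdash^{\mathsf{d}}_T \bot$ and appeal to \lemref{xmcases}, but that would require weakening $\Gamma,t \vdash^{\mathsf{d}}_T \bot$ to $\Gamma,s,t \vdash^{\mathsf{d}}_T \bot$, so the direct argument above is preferable.
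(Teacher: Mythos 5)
Your proof is correct and takes essentially the same route as the paper's: derive $\Gamma \vdash^{\mathsf{d}}_T s$ and $\Gamma \vdash^{\mathsf{d}}_T \lnot t$ from the two hypotheses via $\lnot$i (and $\lnot\lnot$e), apply ${\limp}$e to obtain $\Gamma \vdash^{\mathsf{d}}_T t$, and close with $\lnot$e. The only difference is that the paper leaves the well-typedness side conditions of $\lnot$i implicit, which you discharge explicitly by inverting ${\limp}$type and reading $t\colon o$ off the derived judgment $\Gamma \vdash^{\mathsf{d}}_T t$.
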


\begin{proof}
  From the assumptions, we get $\Gamma \vdash^{\mathsf{d}}_T s$ and
  $\Gamma \vdash^{\mathsf{d}}_T \lnot t$.  Together with
  $\Gamma \vdash^{\mathsf{d}}_T s \limp t$, we obtain
  $\Gamma \vdash^{\mathsf{d}}_T t$ and therefore
  $\Gamma \vdash^{\mathsf{d}}_T \bot$. \qed
\end{proof}

\begin{lemma}[$\nimpr$]
  \label{lem:soundness_nimp}
  Let $(T,\Gamma,\Gamma')$ be a well-formed branch.
  Assume $\lnot (s \limp t) \in T,\Gamma$.
  If $\Gamma,s,\lnot t \vdash^{\mathsf{d}}_T \bot$
  then $\Gamma \vdash^{\mathsf{d}}_T \bot$.
\end{lemma}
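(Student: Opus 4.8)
The plan is to prove the soundness of $\nimpr$ by exactly the pattern established in the preceding lemmas (e.g.\ \lemref{soundness_imp} and \lemref{soundness_dneg}): assume the single alternative is refutable and derive $\Gamma \vdash^{\mathsf{d}}_T \bot$ directly in the natural deduction calculus of \figref{dholndcalculus}.

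First I would observe that since $\lnot(s \limp t) \in T,\Gamma$, the well-formedness of the branch together with the $\lnot$ and $\limp$ typing rules guarantees $\Gamma \vdash^{\mathsf{d}}_T \lnot(s\limp t)$ and in particular that $s$ and $t$ are well-typed formulas. The key step is to recover both $\Gamma \vdash^{\mathsf{d}}_T s$ and $\Gamma \vdash^{\mathsf{d}}_T \lnot t$ from the hypothesis $\lnot(s\limp t)$, so that these can be added to the context to match the assumption $\Gamma,s,\lnot t \vdash^{\mathsf{d}}_T \bot$. For $\lnot t$: assuming $t$, we can derive $s \limp t$ by ${\limp}\text{i}$ (weakening), contradicting $\lnot(s\limp t)$ via $\lnot$e, hence $\Gamma \vdash^{\mathsf{d}}_T \lnot t$ by $\lnot$i. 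For $s$: I would argue by contradiction, assuming $\lnot s$; then from $\bot$e (using $\bot$ obtained from $s$ under the assumption $\lnot s$) one derives $s\limp t$, again contradicting the hypothesis, so $\Gamma \vdash^{\mathsf{d}}_T \lnot\lnot s$ and thus $\Gamma \vdash^{\mathsf{d}}_T s$ by $\lnot\lnot$e.

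With $\Gamma \vdash^{\mathsf{d}}_T s$ and $\Gamma \vdash^{\mathsf{d}}_T \lnot t$ in hand, the assumption $\Gamma,s,\lnot t \vdash^{\mathsf{d}}_T \bot$ can be discharged by two applications of ${\limp}\text{e}$ (equivalently, by the admissible cut that lets a provable formula be removed from the context), yielding $\Gamma \vdash^{\mathsf{d}}_T \bot$ as desired.

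The only mildly delicate point will be deriving $s$ and $\lnot t$ from $\lnot(s\limp t)$ cleanly in this particular calculus, since it uses primitive $\lnot$ and $\limp$ rather than defining implication via negation and disjunction; the derivations above are routine but rely on the intuitionistically-flavoured introduction rules combined with $\lnot\lnot$e for the classical step recovering $s$. I expect no real obstacle here, as this mirrors the standard justification of the $\nimpr$ tableau rule and the well-formedness bookkeeping is identical to the earlier cases.
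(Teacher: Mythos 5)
Your proposal inverts the direction of the paper's argument, and in doing so hits a genuine DHOL-specific gap. The paper's proof goes the other way: from the hypothesis $\Gamma,s,\lnot t \vdash^{\mathsf{d}}_T \bot$ it obtains $\Gamma, s \vdash^{\mathsf{d}}_T t$ (via $\lnot$i and $\lnot\lnot$e), hence $\Gamma \vdash^{\mathsf{d}}_T s \limp t$ by ${\limp}\text{i}$, and closes immediately with $\lnot$e against the branch formula $\lnot(s \limp t)$. You instead decompose $\lnot(s\limp t)$ into $\Gamma \vdash^{\mathsf{d}}_T s$ and $\Gamma \vdash^{\mathsf{d}}_T \lnot t$ and then cut against the hypothesis. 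The problem is the judgment $\Gamma \vdash^{\mathsf{d}}_T \lnot t$ itself: by the ${\limp}\text{type}$ rule of \figref{dholndcalculus}, well-formedness of $\lnot(s\limp t)$ only yields $\Gamma, s \vdash^{\mathsf{d}}_T t\colon o$, i.e., $t$ is typed \emph{under the assumption} $s$. In DHOL, $t$ may fail to be well-typed over $\Gamma$ alone, in which case $\Gamma \vdash^{\mathsf{d}}_T \lnot t$ is not even a well-formed judgment, and your derivation of it breaks: assuming $t$ directly over $\Gamma$ requires $\Gamma \vdash^{\mathsf{d}}_T t\colon o$ for ctxAssume, and the concluding $\lnot$i has the same typing premise. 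Concretely, with $x\colon\lst\:n$ and $y\colon\lst\:m$ in $\Gamma$, take $s$ to be $m =_{\nat} n$ and $t$ to be $x =_{\lst\:n} y$: then $t$ type-checks only under $s$ --- this left-to-right dependency is precisely the short-circuit type-checking phenomenon the paper discusses in \secref{implementation}. The same issue recurs at your final step, where the second application of ${\limp}\text{e}$ again needs $\Gamma \vdash^{\mathsf{d}}_T \lnot t$. (Your derivation of $\Gamma \vdash^{\mathsf{d}}_T s$, by contrast, is fine, since $s$ is typed over $\Gamma$.)

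The gap is repairable with a small change: derive $\lnot t$ in the context $\Gamma, s$ rather than over $\Gamma$ (assume $t$ under $\Gamma, s$, get $s \limp t$ by ${\limp}\text{i}$, contradict $\lnot(s\limp t)$, then apply $\lnot$i using $\Gamma, s \vdash^{\mathsf{d}}_T t\colon o$); combine this with the hypothesis via $\lnot$i and $\lnot$e inside $\Gamma, s$ to get $\Gamma, s \vdash^{\mathsf{d}}_T \bot$, and finally discharge $s$ by $\lnot$i against your derivation of $\Gamma \vdash^{\mathsf{d}}_T s$. But the paper's route is both shorter and sidesteps the issue entirely, because it never needs $t$ typed outside the scope of $s$. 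The delicate point you anticipated at the end of your proposal is real, but it is not the classical propositional reasoning (which you handle correctly); it is the dependent typing discipline of the contexts, and your proof as written does not respect it.
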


\begin{proof}
  From the assumption, we get
  $\Gamma \vdash^{\mathsf{d}}_T s \limp t$.  Hence, an application of
  $\lnot$e yields $\Gamma \vdash^{\mathsf{d}}_T \bot$. \qed
\end{proof}

\begin{lemma}[$\allr$]
  \label{lem:soundness_all}
  Let $(T,\Gamma,\Gamma')$ be a well-formed branch.
  Assume $\forall x\colon A.s \in T,\Gamma$
  and $\Gamma \vdash^{\mathsf{d}}_T t\colon A$.
  If $\Gamma,[s[x/t]] \vdash^{\mathsf{d}}_T \bot$
  then $\Gamma \vdash^{\mathsf{d}}_T \bot$.
\end{lemma}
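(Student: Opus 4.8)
The plan is to mirror the structure of the preceding soundness lemmas: promote the membership hypothesis to a provability judgment, apply the matching elimination rule of \figref{dholndcalculus}, normalize with \lemref{bracket}, and then close the branch against the refutation hypothesis. Since $(T,\Gamma,\Gamma')$ is well-formed we have $\vdash^{\mathsf{d}}_T \Gamma\:\ctx$, so from $\forall x\colon A.s \in T,\Gamma$ the axiom rule (if the formula lies in $T$) or the assume rule (if it lies in $\Gamma$) gives $\Gamma \vdash^{\mathsf{d}}_T \forall x\colon A.s$. Combined with the typing hypothesis $\Gamma \vdash^{\mathsf{d}}_T t\colon A$, an application of $\forall$e yields $\Gamma \vdash^{\mathsf{d}}_T s[x/t]$.

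Next I would strip the normalization bracket. To invoke \lemref{bracket} I need $\Gamma \vdash^{\mathsf{d}}_T s[x/t]\colon o$; this follows from well-formedness of the branch, since the premise $\forall x\colon A.s$ is a well-formed formula and hence $\forall$type provides $\Gamma, x\colon A \vdash^{\mathsf{d}}_T s\colon o$, which substituting the term $t\colon A$ preserves. Thus \lemref{bracket} gives $\Gamma \vdash^{\mathsf{d}}_T [s[x/t]]$.

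Finally I would perform the cut against the refutation hypothesis exactly as in the earlier lemmas: from $\Gamma,[s[x/t]] \vdash^{\mathsf{d}}_T \bot$ the rule $\lnot$i yields $\Gamma \vdash^{\mathsf{d}}_T \lnot [s[x/t]]$, and combining this with $\Gamma \vdash^{\mathsf{d}}_T [s[x/t]]$ via $\lnot$e concludes $\Gamma \vdash^{\mathsf{d}}_T \bot$. I do not expect a genuine obstacle here, as this is the most direct of the rule-soundness arguments; the only point requiring a moment of care is the typing side condition $s[x/t]\colon o$ needed to apply \lemref{bracket} (and implicitly $\lnot$i), which is dispatched by well-formedness of the branch as described above.
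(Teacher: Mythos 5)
Your proposal is correct and follows essentially the same route as the paper's proof: obtain $\Gamma \vdash^{\mathsf{d}}_T \forall x\colon A.s$, apply $\forall$e and \lemref{bracket} to derive $\Gamma \vdash^{\mathsf{d}}_T [s[x/t]]$, then get $\Gamma \vdash^{\mathsf{d}}_T \lnot [s[x/t]]$ from the refutation hypothesis via $\lnot$i and close with $\lnot$e. The only difference is that you spell out the typing side condition $s[x/t]\colon o$ needed for \lemref{bracket}, which the paper leaves implicit.
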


\begin{proof}
  From $\Gamma,[s[x/t]] \vdash^{\mathsf{d}}_T \bot$
  we get $\Gamma \vdash^{\mathsf{d}}_T \lnot [s[x/t]]$.
  We also have
  $\Gamma \vdash^{\mathsf{d}}_T \forall x\colon A.s$
  and therefore $\Gamma \vdash^{\mathsf{d}}_T [s[x/t]]$
  by $\forall$e and \lemref{bracket}. Finally,
  $\Gamma \vdash^{\mathsf{d}}_T \bot$ by $\lnot$e. \qed
\end{proof}

\begin{lemma}[$\nallr$]
  \label{lem:soundness_nall}
  Let $(T,\Gamma,\Gamma')$ be a well-formed branch.
  Assume $\lnot \forall x\colon A.s \in T,\Gamma$
  and $y$ is fresh for $T,\Gamma$.
  If $\Gamma,y\colon A,\lnot[s[x/y]] \vdash^{\mathsf{d}}_T \bot$
  then $\Gamma \vdash^{\mathsf{d}}_T \bot$.
\end{lemma}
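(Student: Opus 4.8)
The plan is to mirror the structure of the soundness proofs for $\allr$ and $\fer$ above: turn the given refutation of the extended branch into a DHOL proof of $\forall x\colon A.s$, which then contradicts the assumed $\lnot\forall x\colon A.s$. First I would extract the typing information I need from well-formedness of the branch. Since $\lnot\forall x\colon A.s \in T,\Gamma$ we have $\Gamma \vdash^{\mathsf{d}}_T \lnot\forall x\colon A.s$ and, by an inversion argument as in \lemref{eqtype} applied to the typing rules for $\lnot$ and $\forall$, also $\Gamma, x\colon A \vdash^{\mathsf{d}}_T s\colon o$. As $y$ is fresh and $\beta\eta$-normalization preserves the type $o$, this gives $\Gamma, y\colon A \vdash^{\mathsf{d}}_T [s[x/y]]\colon o$, which is the side condition needed for the negation rules below.

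Next I would convert the hypothesis $\Gamma, y\colon A, \lnot[s[x/y]] \vdash^{\mathsf{d}}_T \bot$ into a proof of the quantified formula. Applying $\lnot$i yields $\Gamma, y\colon A \vdash^{\mathsf{d}}_T \lnot\lnot[s[x/y]]$; $\lnot\lnot$e then gives $\Gamma, y\colon A \vdash^{\mathsf{d}}_T [s[x/y]]$, and \lemref{bracket} removes the normalization to obtain $\Gamma, y\colon A \vdash^{\mathsf{d}}_T s[x/y]$. An application of $\forall$i generalizes over the fresh variable to $\Gamma \vdash^{\mathsf{d}}_T \forall y\colon A.s[x/y]$. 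Since $y \not\in \xV s$ by freshness, this formula is $\alpha$-equivalent, hence (working modulo $\alpha$) equal, to $\forall x\colon A.s$, so $\Gamma \vdash^{\mathsf{d}}_T \forall x\colon A.s$. Combining this with $\Gamma \vdash^{\mathsf{d}}_T \lnot\forall x\colon A.s$ via $\lnot$e gives $\Gamma \vdash^{\mathsf{d}}_T \bot$.

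The freshness hypothesis on $y$ carries all the weight and is the only place I expect real care to be required: it simultaneously licenses the $\forall$i step (whose implicit side condition is that $y$ does not occur free in $\Gamma$) and guarantees the $\alpha$-equivalence that lets me read $\forall y\colon A.s[x/y]$ back as the original $\forall x\colon A.s$. Everything else is a routine chaining of the rules of \figref{dholndcalculus} together with \lemref{bracket}, exactly as in the soundness arguments for $\allr$ and the other quantifier rules.
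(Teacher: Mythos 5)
Your proof is correct and takes essentially the same route as the paper's: convert the refutation hypothesis into $\Gamma, y\colon A \vdash^{\mathsf{d}}_T s[x/y]$ via \lemref{bracket}, apply $\forall$i over the fresh $y$, identify $\forall y\colon A.s[x/y]$ with $\forall x\colon A.s$ up to $\alpha$-conversion, and close with $\lnot$e against the assumed $\lnot\forall x\colon A.s$. The only difference is that you make explicit the intermediate $\lnot$i/$\lnot\lnot$e steps and the typing side conditions, which the paper's more compact proof leaves implicit.
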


\begin{proof}
  From $\Gamma,y\colon A,\lnot[s[x/y]] \vdash^{\mathsf{d}}_T \bot$ we
  obtain $\Gamma,y\colon A \vdash^{\mathsf{d}}_T s[x/y]$ using
  \lemref{bracket}. An application of $\forall$i yields
  $\Gamma \vdash^{\mathsf{d}}_T \forall y\colon A.s[x/y]$ which is the
  same as $\Gamma \vdash^{\mathsf{d}}_T \forall x\colon A.s$.  Since
  we also have
  $\Gamma \vdash^{\mathsf{d}}_T \lnot \forall x\colon A.s$, we
  conclude $\Gamma \vdash^{\mathsf{d}}_T \bot$ by $\lnot$e. \qed
\end{proof}

\begin{lemma}[$\ber$]
  \label{lem:soundness_be}
  Let $(T,\Gamma,\Gamma')$ be a well-formed branch.
  Assume $s \neq_o t \in T,\Gamma$.
  If $\Gamma,s,\lnot t \vdash^{\mathsf{d}}_T \bot$
  and $\Gamma,\lnot s,t \vdash^{\mathsf{d}}_T \bot$
  then $\Gamma \vdash^{\mathsf{d}}_T \bot$.
\end{lemma}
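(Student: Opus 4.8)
The plan is to reduce the goal $\Gamma \vdash^{\mathsf{d}}_T \bot$ to proving $\Gamma \vdash^{\mathsf{d}}_T s =_o t$: since $s \neq_o t \in T,\Gamma$ gives $\Gamma \vdash^{\mathsf{d}}_T \lnot(s =_o t)$, an application of $\lnot$e against $\Gamma \vdash^{\mathsf{d}}_T s =_o t$ would immediately yield $\bot$. By \lemref{eqtype}, applied to the well-typedness of $s \neq_o t$, I also record $\Gamma \vdash^{\mathsf{d}}_T s\colon o$ and $\Gamma \vdash^{\mathsf{d}}_T t\colon o$, which are needed to invoke $\forall$e and \lemref{xmcases}. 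Rather than deriving $s =_o t$ in one shot, I would establish it separately under the two cases $s$ and $\lnot s$ and then glue them together with \lemref{xmcases}.

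First I would distill the two branch hypotheses into conditional facts. From $\Gamma,s,\lnot t \vdash^{\mathsf{d}}_T \bot$, the rules $\lnot$i and $\lnot\lnot$e give $\Gamma,s \vdash^{\mathsf{d}}_T t$; from $\Gamma,\lnot s,t \vdash^{\mathsf{d}}_T \bot$, the rule $\lnot$i gives $\Gamma,\lnot s \vdash^{\mathsf{d}}_T \lnot t$. Thus in the context $\Gamma,s$ both $s$ and $t$ are provable, while in the context $\Gamma,\lnot s$ both $\lnot s$ and $\lnot t$ are provable.

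The crux, and the step I expect to be the main obstacle, is turning these biconditional-style facts into the equality $s =_o t$, since nothing short of propositional extensionality relates provable equivalence on $o$ to $=_o$. Here I would exploit boolExt through two auxiliary facts, each valid in any well-formed context: $\Gamma \vdash^{\mathsf{d}}_T s$ implies $\Gamma \vdash^{\mathsf{d}}_T s =_o \lnot\bot$, and $\Gamma \vdash^{\mathsf{d}}_T \lnot s$ implies $\Gamma \vdash^{\mathsf{d}}_T s =_o \bot$. The first follows by applying boolExt to the closed predicate $\lambda x\colon o.\: x \limp (x =_o \lnot\bot)$: its two premises $\bot \limp (\bot =_o \lnot\bot)$ and $\lnot\bot \limp (\lnot\bot =_o \lnot\bot)$ are discharged by $\bot$e and refl respectively (modulo \lemref{bracket} for the $\beta$-redexes), after which $\forall$e at $s$ and $\limp$e with the hypothesis $s$ deliver $s =_o \lnot\bot$; the second fact is symmetric, using $\lambda x\colon o.\: \lnot x \limp (x =_o \bot)$. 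Applying these in the context $\Gamma,s$ to $s$ and to $t$ yields $\Gamma,s \vdash^{\mathsf{d}}_T s =_o \lnot\bot$ and $\Gamma,s \vdash^{\mathsf{d}}_T t =_o \lnot\bot$, whence sym together with the admissible transitivity rule \cite{RRB23ext} gives $\Gamma,s \vdash^{\mathsf{d}}_T s =_o t$; the case $\Gamma,\lnot s \vdash^{\mathsf{d}}_T s =_o t$ is analogous, routed through $\bot$.

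Finally I would close the argument. In each case the derived equality $s =_o t$ contradicts $\Gamma \vdash^{\mathsf{d}}_T s \neq_o t$ (weakened to the extended context), so $\lnot$e gives $\Gamma,s \vdash^{\mathsf{d}}_T \bot$ and $\Gamma,\lnot s \vdash^{\mathsf{d}}_T \bot$; \lemref{xmcases}, which applies because the branch is well-formed and $\Gamma \vdash^{\mathsf{d}}_T s\colon o$, then yields $\Gamma \vdash^{\mathsf{d}}_T \bot$. The routine parts are the propositional tautologies checked by $\bot$e and refl and the bookkeeping of weakening; the only genuinely non-trivial ingredient is the boolExt-based derivation of propositional extensionality.
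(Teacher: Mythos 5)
Your proof is correct, but it takes a genuinely different route from the paper's. The paper distills the branch hypotheses into $\Gamma,s \vdash^{\mathsf{d}}_T t$ and $\Gamma,t \vdash^{\mathsf{d}}_T s$ (the latter by contraposing $\Gamma,\lnot s \vdash^{\mathsf{d}}_T \lnot t$) and then obtains $\Gamma \vdash^{\mathsf{d}}_T s =_o t$ in a single step by invoking propExt, an admissible rule proven in the appendix of the extended version of \cite{RRB23}; $\lnot$e then closes the proof. You never invoke propExt: instead you case-split on $s$ via \lemref{xmcases} and, in each branch, normalize both $s$ and $t$ to a common truth value ($\lnot\bot$ when $s$ holds, $\bot$ when $\lnot s$ holds) by instantiating boolExt at the closed predicates $\lambda x\colon o.\: x \limp (x =_o \lnot\bot)$ and $\lambda x\colon o.\: \lnot x \limp (x =_o \bot)$, then chain sym with the admissible trans rule to reach $s =_o t$ in each branch and contradict $s \neq_o t$ there. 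In effect you re-derive the needed instance of propositional extensionality inside the calculus of \figref{dholndcalculus}, which is essentially the content of the admissibility proof the paper cites; your argument is therefore longer but self-contained modulo trans (which the paper also imports from \cite{RRB23ext} in other soundness lemmas), whereas the paper's proof is shorter and more modular at the cost of outsourcing the key extensionality step. Your supporting details check out: \lemref{eqtype} supplies $s,t\colon o$ as needed for the $\forall$e instantiations and for \lemref{xmcases}, the boolExt premises are correctly discharged by $\bot$e and refl, and the $\beta$-redex handling via beta, cong$\vdash$, and \lemref{bracket} is sound.
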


\begin{proof}
  From the assumptions we obtain
  $\Gamma,s \vdash^{\mathsf{d}}_T t$ and
  $\Gamma,\lnot s \vdash^{\mathsf{d}}_T \lnot t$
  which is equivalent to
  $\Gamma,t \vdash^{\mathsf{d}}_T s$.
  Hence, the rule
  \[\frac{\Gamma,s \vdash_T t \quad \Gamma,t \vdash_T s}{\Gamma \vdash_T s =_o t}\text{propExt}\]
  which is proven admissible in the appendix of \cite{RRB23ext}
    (the extended version of \cite{RRB23}) yields
  $\Gamma \vdash^{\mathsf{d}}_T s =_o t$ and we obtain
  $\Gamma \vdash^{\mathsf{d}}_T \bot$ with $\lnot$e. \qed
\end{proof}

\begin{lemma}[$\bqr$]
  \label{lem:soundness_bq}
  Let $(T,\Gamma,\Gamma')$ be a well-formed branch.
  Assume $s =_o t \in T,\Gamma$.
  If $\Gamma,s,t \vdash^{\mathsf{d}}_T \bot$
  and $\Gamma,\lnot s,\lnot t \vdash^{\mathsf{d}}_T \bot$
  then $\Gamma \vdash^{\mathsf{d}}_T \bot$.
\end{lemma}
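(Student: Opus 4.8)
The plan is to treat this as the exact dual of the $\ber$ case: there the boolean equality $s =_o t$ was \emph{concluded} via propExt, whereas here it is \emph{assumed} and must be exploited to derive $\bot$ through a classical case analysis on $s$. First I would extract the typing information: from $s =_o t \in T,\Gamma$ we get $\Gamma \vdash^{\mathsf{d}}_T (s =_o t)\colon o$, so \lemref{eqtype} yields $\Gamma \vdash^{\mathsf{d}}_T s\colon o$. Together with the well-formedness of the branch (which provides $\vdash^{\mathsf{d}} T\:\thy$ and $\vdash^{\mathsf{d}}_T \Gamma\:\ctx$), this supplies exactly the hypotheses required to apply \lemref{xmcases} to $s$ at the very end.

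The heart of the argument is to refute both halves of the case split, namely to show $\Gamma,s \vdash^{\mathsf{d}}_T \bot$ and $\Gamma,\lnot s \vdash^{\mathsf{d}}_T \bot$. For the positive half I would turn the hypothesis $\Gamma,s,t \vdash^{\mathsf{d}}_T \bot$ into $\Gamma,s \vdash^{\mathsf{d}}_T \lnot t$ by $\lnot$i, and independently derive $\Gamma,s \vdash^{\mathsf{d}}_T t$ by applying sym to the assumed $s =_o t$ to obtain $t =_o s$ and then feeding this together with the assumption $s$ into cong$\vdash$; an application of $\lnot$e then closes the half. For the negative half, the hypothesis $\Gamma,\lnot s,\lnot t \vdash^{\mathsf{d}}_T \bot$ gives $\Gamma,\lnot s \vdash^{\mathsf{d}}_T \lnot\lnot t$ by $\lnot$i and hence $\Gamma,\lnot s \vdash^{\mathsf{d}}_T t$ by $\lnot\lnot$e; feeding $t$ directly into cong$\vdash$ alongside $s =_o t$ yields $\Gamma,\lnot s \vdash^{\mathsf{d}}_T s$, which contradicts the assumption $\lnot s$ by $\lnot$e.

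With both halves refuted, \lemref{xmcases} immediately delivers $\Gamma \vdash^{\mathsf{d}}_T \bot$. I do not expect a genuine obstacle here; the one subtlety to watch is the orientation of the equality when invoking cong$\vdash$, which concludes the left-hand side of an $=_o$ from a proof of its right-hand side. This is why symmetry is needed in the positive half (where we hold a proof of $s$ and want $t$) but can be dispensed with in the negative half (where we hold a proof of $t$ and want $s$).
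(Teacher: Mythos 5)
Your proof is correct and follows essentially the same route as the paper's: a case split on $s$ via \lemref{xmcases}, with each half closed by cong$\vdash$ together with $\lnot$i/$\lnot$e applied to the respective hypothesis. In fact you are slightly more explicit than the paper, which compresses the negative half into ``in a similar way'' and glosses over the sym step needed to orient $s =_o t$ correctly for cong$\vdash$ in the positive half.
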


\begin{proof}
  By assumption, we have $\Gamma, s \vdash^{\mathsf{d}}_T s =_o t$ and
  $\Gamma, s \vdash^{\mathsf{d}}_T s$.  Hence, we obtain
  $\Gamma, s \vdash^{\mathsf{d}}_T t$ by cong$\vdash$.  Furthermore,
  $\lnot$i applied to $\Gamma,s,t \vdash^{\mathsf{d}}_T \bot$ yields
  $\Gamma,s \vdash^{\mathsf{d}}_T \lnot t$.  Hence, we arrive at
  $\Gamma,s \vdash^{\mathsf{d}}_T \bot$ by an application of $\lnot$e.
  In a similar way, we can derive
  $\Gamma, \lnot s \vdash^{\mathsf{d}}_T \bot$.  Therefore, an
  application of \lemref{xmcases} gives us the desired result. \qed
\end{proof}

\begin{lemma}[$\matr$]
  \label{lem:soundness_mat}
  Let $(T,\Gamma,\Gamma')$ be a well-formed branch.
  Assume $x\: s_1 \:\dots\: s_n \in T,\Gamma$,
  $\lnot x\: t_1 \:\dots\: t_n \in T, \Gamma$
  and let $\Gamma \vdash^{\mathsf{d}}_T x\colon \dpi y_1\colon A_1 \cdots \dpi y_n \colon A_n.o$.
  If
  \[\Gamma,s_i \neq_{A_i[x_1/s_1,\dots,x_{i-1}/s_{i-1}]} t_i \vdash^{\mathsf{d}}_T \bot\]
  for all $1 \leq i \leq n$ then $\Gamma \vdash^{\mathsf{d}}_T \bot$.
\end{lemma}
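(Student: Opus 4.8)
The plan is to follow the soundness proof of $\decr$ (\lemref{soundness_dec}) almost verbatim, the only genuine difference being that the head symbol $x$ here has codomain $o$, so that $x\: s_1 \dots s_n$ and $x\: t_1 \dots t_n$ are \emph{formulas} rather than terms of a base type. Because of this, the final contradiction is obtained by transporting provability along a boolean equation instead of directly pairing an equation with its negation.

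First I would turn each refutation hypothesis into an equation: from $\Gamma, s_i \neq_{A_i[x_1/s_1,\dots,x_{i-1}/s_{i-1}]} t_i \vdash^{\mathsf{d}}_T \bot$ one obtains $\Gamma \vdash^{\mathsf{d}}_T s_i =_{A_i[x_1/s_1,\dots,x_{i-1}/s_{i-1}]} t_i$ by $\lnot$i followed by $\lnot\lnot$e (the disequation has type $o$ since the hypothesis presupposes a well-formed context). Using refl on the typing assumption gives $\Gamma \vdash^{\mathsf{d}}_T x =_{\dpi y_1\colon A_1 \cdots \dpi y_n\colon A_n.o} x$, and feeding the equations $s_i =_{A_i[x_1/s_1,\dots,x_{i-1}/s_{i-1}]} t_i$ into $n$ successive applications of congAppl yields the boolean equation $\Gamma \vdash^{\mathsf{d}}_T x\: s_1 \dots s_n =_o x\: t_1 \dots t_n$; the annotating type collapses to $o$ because the codomain is dependency-free, and each intermediate annotation substitutes the left arguments $s_j$, which is exactly why the derived equations carry the type annotations the premises supply.

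Next, the two membership assumptions give $\Gamma \vdash^{\mathsf{d}}_T x\: s_1 \dots s_n$ and $\Gamma \vdash^{\mathsf{d}}_T \lnot x\: t_1 \dots t_n$. Applying sym to the boolean equation and then cong$\vdash$ together with $\Gamma \vdash^{\mathsf{d}}_T x\: s_1 \dots s_n$ produces $\Gamma \vdash^{\mathsf{d}}_T x\: t_1 \dots t_n$, and a final $\lnot$e against $\Gamma \vdash^{\mathsf{d}}_T \lnot x\: t_1 \dots t_n$ closes the branch with $\Gamma \vdash^{\mathsf{d}}_T \bot$. The proof is essentially routine, so there is no serious obstacle; the one point deserving attention is precisely this last stage, where the codomain $o$ forces the detour through cong$\vdash$, and one must orient sym correctly so that provability flows from $x\: s_1 \dots s_n$ to $x\: t_1 \dots t_n$ rather than the reverse. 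As in \thmref{tableau_soundness}, well-formedness of the alternatives is not addressed by this lemma but is supplied by the surrounding induction.
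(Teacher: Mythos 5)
Your proof is correct and is essentially the argument the paper gives, just inlined: the paper derives $\Gamma \vdash^{\mathsf{d}}_T x\, s_1 \dots s_n \neq_o x\, t_1 \dots t_n$ from the two membership assumptions and then delegates the refl/congAppl reasoning to \lemref{soundness_dec}, whereas you carry out that same derivation directly and close via sym, cong$\vdash$ and $\lnot$e instead of pairing the equation with a disequation. The extra care you take with the intermediate congAppl type annotations and the orientation of sym matches what the delegated proof implicitly requires.
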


\begin{proof}
  Note that $x s_1 \dots s_n \in T,\Gamma$ and
  $\lnot x\: t_1 \:\dots\: t_n \in T,\Gamma$ implies
  \[\Gamma \vdash^{\mathsf{d}}_T x\: s_1 \:\dots\: s_n \neq_o x\: t_1 \:\dots\: t_n.\]
  Since in \lemref{soundness_dec} it makes no difference whether the assumption is
  contained in $T,\Gamma$ or is merely provable from it, \lemref{soundness_dec} can be
  used to complete the proof. \qed
\end{proof}

\begin{lemma}[$\conr$]
  \label{lem:soundness_con}
  Let $(T,\Gamma,\Gamma')$ be a well-formed branch.
  Assume $s=_{a t_1 \dots t_n} t, u\not=_{a t_1 \dots t_n} v\in T,\Gamma$.
  If
  $\Gamma,s\not=u,t\not=u \vdash^{\mathsf{d}}_T \bot$
  and
  $\Gamma,s\not=v,t\not=v \vdash^{\mathsf{d}}_T \bot$
  then $\Gamma \vdash^{\mathsf{d}}_T \bot$.
\end{lemma}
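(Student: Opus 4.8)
The plan is to collapse the two conjunctive alternatives into a single case split on the equation $s =_{a t_1 \dots t_n} u$ and discharge it with \lemref{xmcases}. First I would record the basic data. From $s =_{a t_1 \dots t_n} t \in T,\Gamma$ and $u \neq_{a t_1 \dots t_n} v \in T,\Gamma$ we get $\Gamma \vdash^{\mathsf{d}}_T s =_{a t_1 \dots t_n} t$ and $\Gamma \vdash^{\mathsf{d}}_T u \neq_{a t_1 \dots t_n} v$, and \lemref{eqtype} shows that $s,t,u,v$ all have type $a\: t_1 \dots t_n$. In particular $\Gamma \vdash^{\mathsf{d}}_T (s =_{a t_1 \dots t_n} u)\colon o$ by $=$type, so \lemref{xmcases} applies to this formula: it suffices to derive $\Gamma \vdash^{\mathsf{d}}_T \bot$ both from the context $\Gamma, s =_{a t_1 \dots t_n} u$ and from the context $\Gamma, s \neq_{a t_1 \dots t_n} u$.

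For the branch $\Gamma, s \neq u$, I would use $s =_{a t_1 \dots t_n} t$ together with the admissible transitivity rule of \cite{RRB23ext} to derive $t \neq u$: assuming $t = u$, transitivity with $s = t$ gives $s = u$, which contradicts $s \neq u$ by $\lnot$e, so $\lnot$i yields $\Gamma, s \neq u \vdash^{\mathsf{d}}_T t \neq u$. Composing this with the first hypothesis $\Gamma, s \neq u, t \neq u \vdash^{\mathsf{d}}_T \bot$ then gives $\Gamma, s \neq u \vdash^{\mathsf{d}}_T \bot$.

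For the branch $\Gamma, s = u$, I would analogously derive both $s \neq v$ and $t \neq v$ from $s = u$, $s = t$, $u \neq v$ using symmetry and transitivity: if $s = v$ then $u = s = v$ contradicts $u \neq v$, and if $t = v$ then $u = s = t = v$ again contradicts $u \neq v$; each contradiction is closed by $\lnot$e and packaged back with $\lnot$i. Feeding these two disequalities into the second hypothesis $\Gamma, s \neq v, t \neq v \vdash^{\mathsf{d}}_T \bot$ yields $\Gamma, s = u \vdash^{\mathsf{d}}_T \bot$. With both alternatives refuted, \lemref{xmcases} delivers $\Gamma \vdash^{\mathsf{d}}_T \bot$.

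The only slightly delicate, but routine, ingredient is the repeated ``cut'': turning a derivation of a disequality in the current context together with a branch hypothesis of the shape $\Gamma, H_1, H_2 \vdash^{\mathsf{d}}_T \bot$ into a refutation of the current context. This is handled uniformly by weakening the hypothesis and discharging the extra assumptions one at a time with $\limp$i, then eliminating with $\limp$e; the only side conditions to verify are that every equality in play is well-typed at $a\: t_1 \dots t_n$ (which follows from \lemref{eqtype}), so that $\lnot$i and the equality rules are applicable. I expect no genuine obstacle, since the mathematical content is purely the propositional confrontation argument $u = s = t = v$ contradicting $u \neq v$, merely threaded through transitivity and the excluded-middle lemma.
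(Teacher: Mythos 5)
Your proposal is correct, and its mathematical core is the same as the paper's: the confrontation $u = s = t = v$ against $u \neq_{a t_1 \dots t_n} v$, carried out with symmetry, the admissible transitivity rule, and the classical case lemma \lemref{xmcases}. The organization is genuinely different, though. The paper applies \lemref{xmcases} repeatedly, unfolding a nested case tree (on $s = u$, then $t = v$, then $s = v$, then $t = u$, \dots) whose leaves are closed either by one of the two branch hypotheses (after silent weakening) or by a sym/trans derivation of $u = v$. You instead make a \emph{single} top-level split on $s =_{a t_1 \dots t_n} u$ and, inside each branch, derive the required disequalities directly by $\lnot$i ($t \neq u$ under $s \neq u$; both $s \neq v$ and $t \neq v$ under $s = u$), then cut against the given hypotheses via $\limp$i/$\limp$e. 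What your route buys is brevity and a cleaner separation of the equational content from the case analysis; what it costs is that the cut and weakening mechanics, which the paper keeps packaged inside its repeated uses of \lemref{xmcases}, must be made explicit and their side conditions checked --- in particular every $\lnot$i and $\limp$i needs a well-typedness premise, which you correctly discharge via \lemref{eqtype} (noting, as the paper does for $\neqr$, that well-typedness of $u \neq v$ forces $(u =_{a t_1 \dots t_n} v)\colon o$). Since the paper's own proof also relies implicitly on admissible weakening (e.g.\ when closing a leaf by a hypothesis stated over a smaller context), neither version places stronger demands on the meta-theory, and your derivation goes through without obstruction.
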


\begin{proof}
  Let
  $\Gamma,s\not=u,t\not=u\vdash^{\mathsf{d}}_T \bot$
  and
  $\Gamma,s\not=v,t\not=v\vdash^{\mathsf{d}}_T \bot$.
  By Lemma~\ref{lem:xmcases} it is enough to prove
  $\Gamma,s= u\vdash^{\mathsf{d}}_T \bot$
  and
  $\Gamma,s\not= u\vdash^{\mathsf{d}}_T \bot$.

  We first prove $\Gamma,s= u\vdash^{\mathsf{d}}_T \bot$.
  By Lemma~\ref{lem:xmcases} it is enough to prove
  $\Gamma,s= u,t= v\vdash^{\mathsf{d}}_T \bot$
  and
  $\Gamma,s= u,t \not= v\vdash^{\mathsf{d}}_T \bot$.
  Using $s= t\in T,\Gamma$ and symmetry and transitivity
  (which is an admissible rule \cite{RRB23ext})
  we can infer $\Gamma,s= u,t= v\vdash^{\mathsf{d}}_T u = v$.
  This along with $u\not= v\in T,\Gamma$ gives
  $\Gamma,s= u,t= v\vdash^{\mathsf{d}}_T \bot$.
  In order to prove $\Gamma,s= u,t \not= v\vdash^{\mathsf{d}}_T \bot$
  we again use Lemma~\ref{lem:xmcases} to reduce
  to proving
  $\Gamma,s= u,t \not= v, s = v\vdash^{\mathsf{d}}_T \bot$
  and
  $\Gamma,s= u,t \not= v, s \not= v\vdash^{\mathsf{d}}_T \bot$.
  We know $\Gamma,s= u,t \not= v, s \not= v\vdash^{\mathsf{d}}_T \bot$
  already since $\Gamma,t \not= v, s \not= v\vdash^{\mathsf{d}}_T \bot$.
  By symmetry and transitivity,
  $\Gamma,s= u,t \not= v, s = v\vdash^{\mathsf{d}}_T u = v$
  giving 
  $\Gamma,s= u,t \not= v, s = v\vdash^{\mathsf{d}}_T \bot$.
  Hence $\Gamma,s= u,t \not= v\vdash^{\mathsf{d}}_T \bot$
  and $\Gamma,s= u\vdash^{\mathsf{d}}_T \bot$.

  All that remains is to prove $\Gamma,s\not= u\vdash^{\mathsf{d}}_T \bot$.
  By Lemma~\ref{lem:xmcases} it is enough to prove
  $\Gamma,s\not= u,t = u\vdash^{\mathsf{d}}_T \bot$
  since we already know $\Gamma,s\not= u,t \not= u\vdash^{\mathsf{d}}_T \bot$.
  By Lemma~\ref{lem:xmcases} it is enough to prove
  $\Gamma,s\not= u,t = u,t = v\vdash^{\mathsf{d}}_T \bot$
  and
  $\Gamma,s\not= u,t = u,t \not= v\vdash^{\mathsf{d}}_T \bot$.
  We have $\Gamma,s\not= u,t = u,t = v\vdash^{\mathsf{d}}_T \bot$
  since $\Gamma,s\not= u,t = u,t = v\vdash^{\mathsf{d}}_T u = v$
  by symmetry and transitivity.
  By Lemma~\ref{lem:xmcases} all that remains is to prove
  $\Gamma,s\not= u,t = u,t \not= v, s = v\vdash^{\mathsf{d}}_T \bot$
  and
  $\Gamma,s\not= u,t = u,t \not= v, s \not= v\vdash^{\mathsf{d}}_T \bot$.
  We know $\Gamma,s\not= u,t = u,t \not= v, s \not= v\vdash^{\mathsf{d}}_T \bot$
  since $\Gamma,t \not= v, s \not= v\vdash^{\mathsf{d}}_T \bot$.
  We finally have $\Gamma,s\not= u,t = u,t \not= v, s = v\vdash^{\mathsf{d}}_T \bot$
  as a consequence of
  $\Gamma,s\not= u,t = u,t \not= v, s = v\vdash^{\mathsf{d}}_T u = v$,
  which again follows from symmetry and transitivity. \qed
\end{proof}

\end{document}